\newtheorem{theorem}{Theorem}[section]
\newtheorem{lemma}[theorem]{Lemma}
\newtheorem{claim}[theorem]{Claim}
\newtheorem{definition}[theorem]{Definition}
\newtheorem{fact}[theorem]{Fact}
\newtheorem{observation}[theorem]{Observation}
\newtheorem{construction}[theorem]{Construction}
\newcommand{\ignore}[1]{}
\DeclareMathOperator*{\argmin}{argmin} 
\newcommand{\cB}{\mathcal{B}}
\newcommand{\cC}{{\cal C}}
\newcommand{\cD}{\mathcal{D}}
\newcommand{\cE}{{\cal E}}
\newcommand{\cF}{\mathcal{F}}
\newcommand{\cG}{\mathcal{G}}
\newcommand{\cI}{{\cal I}}
\newcommand{\cM}{{\cal M}}
\newcommand{\cP}{\mathcal{P}}
\newcommand{\cT}{{\cal T}}
\newcommand{\N}{\mathbb N}
\newcommand{\eps}{\varepsilon}
\newcommand{\poly}{\mathrm{poly}}
\newcommand{\half}{1/2}
\newcommand{\bv}{\boldsymbol{v}}
\newcommand{\bw}{\boldsymbol{w}}
\newcommand{\bA}{\boldsymbol{A}}
\newcommand{\bC}{\boldsymbol{C}}
\newcommand{\Sec}[1]{\hyperref[sec:#1]{\S\ref*{sec:#1}}} 
\newcommand{\Eqn}[1]{\hyperref[eq:#1]{(\ref*{eq:#1})}} 
\newcommand{\Fig}[1]{\hyperref[fig:#1]{Fig.\,\ref*{fig:#1}}} 
\newcommand{\Tab}[1]{\hyperref[tab:#1]{Tab.\,\ref*{tab:#1}}} 
\newcommand{\Thm}[1]{\hyperref[thm:#1]{Theorem\,\ref*{thm:#1}}} 
\newcommand{\Fact}[1]{\hyperref[fact:#1]{Fact\,\ref*{fact:#1}}} 
\newcommand{\Lem}[1]{\hyperref[lem:#1]{Lemma\,\ref*{lem:#1}}} 
\newcommand{\Prop}[1]{\hyperref[prop:#1]{Prop.~\ref*{prop:#1}}} 
\newcommand{\Cor}[1]{\hyperref[cor:#1]{Corollary~\ref*{cor:#1}}} 
\newcommand{\Conj}[1]{\hyperref[conj:#1]{Conjecture~\ref*{conj:#1}}} 
\newcommand{\Def}[1]{\hyperref[def:#1]{Definition~\ref*{def:#1}}} 
\newcommand{\Alg}[1]{\hyperref[alg:#1]{Alg.~\ref*{alg:#1}}} 
\newcommand{\Clm}[1]{\hyperref[clm:#1]{Claim~\ref*{clm:#1}}} 
\newcommand{\Obs}[1]{\hyperref[obs:#1]{Observation~\ref*{obs:#1}}} 
\newcommand{\Rem}[1]{\hyperref[rem:#1]{Remark~\ref*{rem:#1}}} 
\newcommand{\Con}[1]{\hyperref[con:#1]{Construction~\ref*{con:#1}}} 
\newcommand{\Step}[1]{\hyperref[step:#1]{Step~\ref*{step:#1}}} 
\newcommand{\Assumption}[1]{\hyperref[assm:#1]{Assumption\,\ref*{assm:#1}}} 
\newcommand{\ProbabilityRender}[2]{
  \@ifnextchar\bgroup%
  {\renderwithdist{#1}{#2}}
   {\singlervrender{#1}{#2}}
}
\newcommand{\singlervrender}[2]{%
   \ensuremath{\mathchoice
       {{#1}\left[ #2 \right]}
       {{#1}[ #2 ]}
       {{#1}[ #2 ]}
       {{#1}[ #2 ]}
   }
}
\newcommand{\renderwithdist}[3]{%
   \@ifnextchar\bgroup
   {\superfancyrender{#1}{#2}{#3}}
   {\ensuremath{\mathchoice
      {\underset{#2}{#1}\left[ #3 \right]}
      {{#1}_{#2}[ #3 ]}
      {{#1}_{#2}[ #3 ]}
      {{#1}_{#2}[ #3 ]}
     }
   }
}
\newcommand{\superfancyrender}[5]{
   \ensuremath{\mathchoice
      {\underset{#1}{{#1}}\left#4 #3 \right#5}
      {{#1}_{#2}#4 #3 #5}
      {{#1}_{#2}#4 #3 #5}
      {{#1}_{#2}#4 #3 #5}
   }
}
\newcommand{\deltaE}{\triangle E}
\newcommand{\comps}{\tt components}
\newcommand{\balsep}{{\tt BalSep}}
\newcommand{\decompose}{{\tt Decompose}}
\newcommand{\delsep}{{\tt Delsep}}
\newcommand{\E}{\mathbf{E}}
\newcommand{\row}[1]{\text{row}_{#1}}
\newcommand{\unlabr}{\texttt{UNLAB}_r}
\newcommand{\unlabu}{\texttt{UNLAB}_u}
\newcommand{\ct}[2]{\mathop{ct}_#2(#1)}
\newcommand{\allct}[1]{\mathop{ct}(#1)}
\newcommand{\dist}{\mathop{dist}}
\begin{document}
\renewcommand{\proofname}{\bfseries Proof:}

\title{The complexity of testing all properties of planar graphs, and the role of isomorphism}

\author{Sabyasachi Basu\thanks{Department of Computer Science, University of California, Santa Cruz. {\href{mailto:sbasu3@ucsc.edu}{sbasu3@ucsc.edu}}}
	\and Akash Kumar\thanks{School of Computer and Communication Sciences, École Polytechnique Fédérale de Lausanne. {\href{mailto:akash.kumar@epfl.ch}{akash.kumar@epfl.ch}} This project has received funding from the European Research Council (ERC) under the European Union’s Horizon 2020 research and innovation programme (grant agreement No 759471) }
	\and C. Seshadhri\thanks{Department of Computer Science, University of California, Santa Cruz. {\href{mailto:sesh@ucsc.edu}{sesh@ucsc.edu}}
	 \newline
	{SB and CS are supported by NSF DMS-2023495, CCF-1740850, CCF-1813165, CCF-1839317, CCF-1908384, CCF-1909790, and ARO Award W911NF1910294.}}
}
\maketitle

\begin{abstract} Consider property testing on bounded degree graphs and let $\varepsilon > 0$ denote the proximity parameter.  A remarkable theorem of Newman-Sohler (SICOMP 2013) asserts that \emph{all} properties of planar graphs (more generally hyperfinite) are testable with query complexity only depending on $\varepsilon$. Recent advances in testing minor-freeness have proven that all additive and monotone properties of planar graphs can be tested in $\mathrm{poly}(\varepsilon^{-1})$ queries. Some properties falling outside this class, such as Hamiltonicity, also have a similar complexity for planar graphs.  Motivated by these results, we ask: can all properties of planar graphs can be tested in $\mathrm{poly}(\varepsilon^{-1})$ queries?  Is there a uniform query complexity upper bound for all planar properties, and what is the ``hardest" such property to test?

	We discover a surprisingly clean and optimal answer. Any property of bounded degree planar graphs can be tested in $\exp(O(\varepsilon^{-2}))$ queries.  Moreover, there is a matching lower bound, up to constant factors in the exponent.  The natural property of testing isomorphism to a fixed graph requires $\exp(\Omega(\varepsilon^{-2}))$ queries, thereby showing that (up to polynomial dependencies) isomorphism to an explicit fixed graph is the hardest property of planar graphs.  The upper bound is a straightforward adapation of the Newman-Sohler analysis that tracks dependencies on $\varepsilon$ more carefully.  The main technical contribution is the lower bound construction, which is achieved by a special family of planar graphs that are all mutually far from each other. 

    We can also apply our techniques to get analogous results for bounded treewidth graphs. We prove that all properties of bounded treewidth graphs can be tested in $\exp(O(\varepsilon^{-1}\log \varepsilon^{-1}))$ queries. Moreover, testing isomorphism to a fixed forest requires $\exp(\Omega(\varepsilon^{-1}))$ queries.
\end{abstract}

\section{Introduction} \label{sec:intro}

Consider the setting of property testing for bounded degree graphs,
under the model of random access to a graph adjacency list, as 
introduced by Goldreich-Ron~\cite{GR02}. Let $G = (V,E)$ be a graph
where $V = [n]$ and the maximum degree is $d$.
We have random access to the list through \emph{neighbor queries}.
There is an oracle that, given $v \in V$ and $i \in [d]$,
returns the $i$th neighbor of $v$ (if no neighbor exists, it returns $\bot$).

For a property $\cP$ of graphs with degree bound $d$, the distance of $G$ to $\cP$
is the minimum number of edge additions/removals required to make $G$ have $\cP$,
divided by $dn$. We say that $G$ is $\eps$-far from $\cP$ if the distance to $\cP$
is more than $\eps$.  
A property tester for $\cP$ is a randomized procedure that takes as input (query access to) $G$ and a proximity parameter, $\eps > 0$.
If $G \in \cP$, the tester must accept with probability at least $2/3$. If $G$ is $\eps$-far from $\cP$,
the tester must reject with probability at least $2/3$. 
In our context, the property $\cP$ is called \emph{testable} if there
exists a property tester for $\cP$ whose query complexity is independent of $n$. 

One of the grand goals of property testing is to classify testable (graph) properties according to
the query complexity of testing them. Of special interest are \emph{efficiently testable properties},
whose testing complexity is $\poly(\eps^{-1})$. For bounded degree graphs,
there is little clarity on this issue. A recent survey by Goldreich states:
``Indeed, it is hard to find a common theme among [efficiently testable] properties...". \cite{Gol21}

Our work focuses on \emph{properties of planar graphs}. Even the problem of just testing planarity has received much attention, whose complexity has only recently been shown to be $\poly(\eps^{-1})$ \cite{BSS08,HKNO,EHNO11,YI:15,LR15,KSS:19}.
Newman-Sohler proved that every planar property (actually, every hyperfinite property) 
is testable, but they do not provide an explicit complexity bound depending on $\eps$ \cite{NS13}.
Kumar-Seshadhri-Stolman recently showed that every \emph{additive and monotone} 
planar property can tested in $\poly(\eps^{-1})$ queries~\cite{KSS:19}. Applying techniques from this 
result, Levi-Shoshan showed that planar Hamiltonicity can be tested efficiently~\cite{LS21}.
The latter property is neither additive nor monotone, so it is natural to ask
if all planar properties can be tested efficiently. If not, does there
exist a uniform complexity bound for all planar properties, and a candidate for 
the ``hardest planar property"? (We note that there is significant work on characterizing testable planar properties,
for the \emph{unbounded degree} case, which is qualitatively different~\cite{CMOS11, CS19}. Details are given in \Sec{related}.)

We discover that the answer to this question is the query complexity
bound $\exp(\Theta(\eps^{-2}))$. Up to constant factors in the exponent,
the hardest planar property is testing isomorphism to a fixed explicit graph.

Let us give some formalism, and discuss the connection to isomorphism. In our discussions, $n$ is basically fixed,
so we are considering a non-uniform setting. A \emph{planar property}
$\Pi$ is a set of unlabeled bounded degree planar graph with $n$ vertices. The input $G$
is a labeled graph and the tester is trying to property test if $G$ is isomorphic (or equal to,
if one ignores the labels) to any member of $\Pi$. Observe that singleton properties,
where $\Pi = \{H\}$, are equivalent to testing if $G$ is isomorphic to an fixed graph $H$.
With this premable, we can state our main results.

Our upper bound is a straightforward adaptation of arguments in Newman-Sohler~\cite{NS13}.

\begin{restatable}{theorem}{thmtestable} \label{thm:testable} Consider a planar property $\Pi$
of $n$-vertex, $d$ degree-bounded graphs. There is a property tester for $\Pi$ that makes
$\poly(d) \exp(O(\eps^{-2}))$ queries.
\end{restatable}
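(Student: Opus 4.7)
The plan is to follow the Newman--Sohler framework for testing properties of hyperfinite graph families, instantiating each ingredient with the tightest planar-specific quantitative bounds so that the $\eps$-dependence collapses to $\exp(O(\eps^{-2}))$. At a high level, the tester samples vertices, uses a local partition oracle to identify the ``component type'' of each sampled vertex in a canonical hyperfinite partition of $G$, assembles an empirical frequency vector $\hat{f}$ over planar isomorphism classes of bounded size, and accepts iff some $H \in \Pi$ induces a frequency vector close to $\hat{f}$.

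Three planar-specific quantitative facts drive the bound, each of which must be invoked with explicit $\eps$-dependence. First, iterated application of the Lipton--Tarjan planar separator yields an $(\eps,k)$-hyperfinite decomposition with $k = O(\eps^{-2})$: at most $\eps d n$ edges can be deleted so that all resulting connected components have size $O(\eps^{-2})$. Second, the planar partition oracle of Kumar--Seshadhri--Stolman (refining Hassidim--Kelner--Nguyen--Onak) provides, using $\poly(d, \eps^{-1})$ neighbor queries per call, consistent local access to a globally fixed $(\eps, O(\eps^{-2}))$-hyperfinite partition. Third, the number of isomorphism classes of connected planar graphs on at most $k$ vertices is $\exp(O(k))$, so for $k = O(\eps^{-2})$ the number of component types is $N = \exp(O(\eps^{-2}))$. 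Define $f_G \in [0,1]^N$ to have $H$-coordinate equal to the fraction of vertices of $G$ lying in a component isomorphic to $H$ within the fixed partition.

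The Newman--Sohler closeness lemma then asserts that if two $n$-vertex degree-$d$ bounded planar graphs $G, G'$ satisfy $\|f_G - f_{G'}\|_1 \leq \eps$, they lie within edit distance $O(\eps d n)$; the proof greedily matches components of the two partitions by isomorphism type, relabels to align matched pairs, and absorbs unmatched components together with the $\eps d n$ cut edges into the error budget. The tester samples $m = \exp(O(\eps^{-2}))$ vertices uniformly, invokes the partition oracle on each at cost $\poly(d, \eps^{-1})$, and forms $\hat{f}$; by a Chernoff bound together with a union bound over the $N$ coordinates, $\|\hat{f} - f_G\|_1 \leq O(\eps)$ with constant probability. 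It then accepts iff some $H \in \Pi$ satisfies $\|f_H - \hat{f}\|_1 \leq O(\eps)$. Completeness and soundness both reduce to the closeness lemma, and the total query cost is $m \cdot \poly(d, \eps^{-1}) = \poly(d) \exp(O(\eps^{-2}))$.

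The main technical point, rather than a deep obstacle, is careful $\eps$-bookkeeping to ensure that no step inflates the exponent beyond $O(\eps^{-2})$. The two tight places are the quadratic separator bound $k = O(\eps^{-2})$ and the linear-in-$k$ count $\exp(O(k))$ of planar isomorphism classes; any slower bound on either would immediately degrade the exponent (for instance, the analogous count for arbitrary degree-$d$ graphs carries an extra $\log d$ factor, which one must verify is absorbed). Since the theorem concerns query complexity only, the final non-uniform search for an $H \in \Pi$ with $\|f_H - \hat{f}\|_1$ small uses no graph queries and is not an issue.
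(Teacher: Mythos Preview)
Your proposal is correct and follows essentially the same approach as the paper: invoke the $\poly(d\eps^{-1})$-time partition oracle for an $(\eps, O(\eps^{-2}))$-decomposition, use the $\exp(O(k))$ bound on planar isomorphism classes of size at most $k$, estimate the component-type frequency/count vector by sampling, and accept iff it is close to a vector arising from $\Pi$. The only place the paper is slightly more explicit is in defining the comparison set $\bC_\Pi$ to range over \emph{all} $(\eps,k)$-partitions of each $H \in \Pi$ (rather than a single canonical $f_H$), which is what makes completeness go through since the partition oracle's output on $G \in \Pi$ need not coincide with any pre-chosen partition of $G$.
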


Our main technical result is a matching lower bound for an explicit singleton property.
One of the surprises (at least to the authors) is that testing isomorphism
to an arbitrary set of planar graphs is not harder, up to polynomial dependencies,
than testing isomorphism to a single fixed graph. A recent compendium of open
problems by Goldreich states the question of determining the complexity of testing
isomorphism (Open Problem 2.4 in~\cite{Gol21}). We note that our theorems resolve
this question for bounded-degree planar graphs.

\begin{restatable}{theorem}{thmplanarlb} \label{thm:main:planar}
For every sufficiently large $n$, there exists a bounded-degree planar graph $H$ on $n$ vertices 
such that property testing $\Pi = \{H\}$ requires $\exp(\Omega(\eps^{-2}))$ queries.
Equivalently, testing isomorphism to $H$ requires $\exp(\Omega(\eps^{-2}))$ queries.
%
\end{restatable}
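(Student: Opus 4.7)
My plan is to apply Yao's minimax principle with a carefully designed family of pairwise $\eps$-far planar graphs. First I fix a gadget size $k = \Theta(\eps^{-2})$. The number of non-isomorphic connected planar bounded-degree graphs on $k$ vertices is $\exp(\Theta(k))$, so a greedy Gilbert--Varshamov selection (each step removing a Hamming ball of edit-radius $\Omega(k)$) produces a family $\cC$ of connected planar bounded-degree graphs on $k$ vertices that are pairwise $\Omega(1)$-far, with $|\cC| = \exp(\Omega(k)) = \exp(\Omega(\eps^{-2}))$. Letting $K = n/k$, I consider graphs that are disjoint unions of $K$ gadgets drawn from $\cC$, encoded by length-$K$ strings over $\cC$; a second Gilbert--Varshamov selection yields a family $\cF$ of such strings with pairwise Hamming distance at least $\eps K$ and $|\cF| \geq \exp(\Omega(\eps^{-2}))$. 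Since gadgets in $\cC$ are $\Omega(1)$-far, differing in $\eps K$ positions forces $\Omega(\eps K \cdot k) = \Omega(\eps n)$ edge edits even after the best isomorphism alignment (mismatched components cost $\Omega(k)$ each), so every pair in $\cF$ is mutually $\eps$-far. The fixed hard graph $H$ is any element of $\cF$.

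For the lower bound, the hard Yao distributions are $D_{\textrm{yes}}$, a uniformly random relabeling of $H$, and $D_{\textrm{no}}$, a uniformly random relabeling of a uniformly random $H' \in \cF \setminus \{H\}$; by the previous paragraph $D_{\textrm{no}}$ is supported on graphs $\eps$-far from $\Pi = \{H\}$. I want to show that no tester making $q = \exp(o(\eps^{-2}))$ queries can distinguish $D_{\textrm{yes}}$ from $D_{\textrm{no}}$ with constant advantage. The key point is that, because gadgets in $\cC$ are pairwise $\Omega(1)$-far on $k$ vertices, observing $o(k)$ vertices inside a single gadget is consistent with many candidates in $\cC$ and reveals negligible information about that gadget's identity; so to pin down a gadget the tester must spend $\Omega(k)$ queries on it, and with budget $q$ it identifies at most $q/k$ gadgets. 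This reduces the distinguishing task to identity testing on the alphabet $\cC$: the tester effectively obtains $q/k$ i.i.d.\ samples from the component-type distribution of the input and must decide whether this distribution equals $H$'s (near-uniform on $\cC$, assuming $n$ is large enough that $K \gg |\cC|$) or differs from it in $L_1$ by $\Theta(\eps)$. The classical identity-testing lower bound $\Omega(\sqrt{|\cC|}/\eps^2)$ then forces $q \geq k \cdot \sqrt{|\cC|}/\eps^2 = \exp(\Omega(\eps^{-2}))$.

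The main obstacle will be rigorously justifying the statement that ``$o(k)$ queries into a single gadget reveal negligible information about its identity.'' Gilbert--Varshamov only guarantees pairwise global distance, and a priori the local statistics of gadgets in $\cC$ could differ enough that an adaptive tester prunes candidates after a short BFS. I expect to address this by strengthening the choice of $\cC$ to require that all selected gadgets share the same radius-$r$ local neighborhood distribution (for some $r$ growing with $k$), then verifying via a probabilistic count that $|\cC| = \exp(\Omega(k))$ remains achievable after this restriction. With that in place, a direct coupling between $D_{\textrm{yes}}$ and $D_{\textrm{no}}$ will show that the $q$-query transcript has total variation distance $o(1)$ whenever $q = \exp(o(\eps^{-2}))$, completing the argument and matching the $\exp(O(\eps^{-2}))$ upper bound of \Thm{testable}.
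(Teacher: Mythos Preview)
Your proposal has a genuine gap in the farness argument, and it is not the obstacle you flagged. The claim ``mismatched components cost $\Omega(k)$ each'' is false without an additional structural property. Pairwise $\Omega(1)$-farness of gadgets in $\cC$ only controls the cost of turning one gadget into another \emph{as a whole}; it does not prevent an isomorphism from $H'$ to $H$ that cuts gadgets into pieces and reassembles them. If every $F \in \cC$ had a balanced separator of size $O(1)$, one could break each gadget into halves with $O(1)$ deletions and recombine halves across gadgets, so that $H'$ is close to $H$ even though they differ in a constant fraction of components. This is exactly why the paper's construction uses grid-based gadgets with balanced separators of size $\Omega(\sqrt{k}) = \Omega(\eps t)$, and why the farness proof (\Clm{main:item2}) must split into cases according to whether the edits disconnect a component. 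A Gilbert--Varshamov selection over arbitrary bounded-degree planar graphs gives no separator guarantee, so your Step~1 family $\cC$ does not support the farness conclusion.

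There is a second issue in the reduction. Hamming distance $\eps K$ between the strings encoding $H$ and $H'$ does not imply $L_1$ distance $\Theta(\eps)$ between their component-type distributions; both can be near-uniform on $\cC$ while the strings disagree in $\eps K$ positions, so the identity-testing lower bound does not apply as stated. The paper sidesteps both this and your acknowledged obstacle: it takes $H$ to contain equally many copies of \emph{every} $F \in \cF$, and the NO instance to contain twice as many copies of each $F$ in a uniformly random half of $\cF$. The component distributions are then exactly uniform on $[|\cF|]$ versus uniform on a random half, and the $\Omega(\sqrt{|\cF|})$ uniformity-testing lower bound applies directly. The partial-exploration problem never arises because the paper simply grants the tester the full component of every queried vertex (\Clm{distinguish}); $q$ queries yield at most $q$ component samples, and $\sqrt{|\cF|}$ is already $\exp(\Omega(\eps^{-2}))$.
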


{\bf Bounded treewidth classes:} In the context of property testing, many results
for planar graphs also hold for minor-free classes. It is natural to ask whether the
bound of $\exp(\Theta(\eps^{-2}))$ is the ``right" answer for any property 
of minor-free graphs. We discover that to \emph{not} be the case. For properties
of bounded treewidth graphs, the answer is between $\exp((\eps^{-1}))$ and $\exp((\eps^{-1})\log \eps^{-1})$.
The following theorems are also derived from the same methods for planar graphs,
but the constructions are substantially simpler. The lower bound is achieved
by a simple construction of forests. The main point of these results
is to show that the bounded of $\exp(\Theta(\eps^{-2}))$ achieved for planar
properties can be significantly beaten for non-trivial minor-closed families.

\begin{restatable}{theorem}{thmtwub} \label{thm:main:tw}
Consider a bounded treewidth property $\Pi$
of $n$-vertex, $d$ degree-bounded graphs. There is a property tester for $\Pi$ that makes
$\poly(d) \exp(O(\eps^{-1}\log \eps^{-1}))$ queries.
\end{restatable}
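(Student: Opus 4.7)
The plan is to follow exactly the Newman–Sohler template that proves the planar upper bound (\Thm{testable}), but instantiate it with the sharper hyperfiniteness parameter available for bounded treewidth graphs. Concretely, every treewidth-$t$ graph on $n$ vertices is $(\eps, \rho(\eps))$-hyperfinite with $\rho(\eps) = O(t/\eps)$: iterating the balanced $(t+1)$-vertex separator provided by a tree decomposition, one can delete at most $\eps n$ edges so that every remaining component has at most $O(t/\eps)$ vertices. For constant $t$ this is linear in $\eps^{-1}$, in contrast to the $O(\eps^{-2})$ bound used in the planar case.

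The tester itself is the Newman–Sohler neighborhood-distribution estimator, exactly as deployed for \Thm{testable}. It samples $s$ uniformly random vertices from $G$, computes the depth-$r$ BFS rooted at each sampled vertex, and accepts iff some target graph in $\Pi$ has rooted $r$-neighborhood distribution within $\ell_1$-distance $O(\eps)$ of the empirical one. We pick $r = \Theta(\rho(\eps)) = \Theta(\eps^{-1})$, so that the $(\eps, \rho(\eps))$-partition guarantee forces almost every sampled ball to lie inside a single component of the partition; hence each observed neighborhood has at most $\rho(\eps)$ vertices. Correctness is inherited verbatim from the Newman–Sohler local–global equivalence for hyperfinite classes: two bounded-degree, treewidth-$t$ graphs whose $r$-neighborhood distributions are $O(\eps)$-close are $O(\eps)$-close as graphs.

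The query complexity is then determined by two counts. First, the number of rooted isomorphism types of bounded-degree graphs on at most $\rho(\eps)$ vertices is bounded by $\rho(\eps)^{O(d \cdot \rho(\eps))} = \exp(O(\eps^{-1} \log \eps^{-1}))$: enumerate a $d$-ary BFS tree of size $\rho(\eps)$, then for each vertex choose its $O(d)$ non-tree incidences among its $\rho(\eps)$ potential partners. Second, estimating a distribution supported on that many outcomes to $\ell_1$-accuracy $O(\eps)$ requires $s = \exp(O(\eps^{-1} \log \eps^{-1}))$ samples, and each sample contributes $\poly(d) \cdot \rho(\eps)$ queries for the BFS exploration. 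Multiplying yields the claimed $\poly(d)\cdot\exp(O(\eps^{-1}\log\eps^{-1}))$ bound.

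The main technical task — bookkeeping rather than a deep obstacle — is verifying that the $\eps$ dependencies in the Newman–Sohler analysis collapse consistently to $\rho(\eps) = O(\eps^{-1})$ in every place they appear, and that the count of realizable neighborhood types is the $\exp(O(\rho \log \rho))$ bound above rather than the much larger bound used in the original Newman–Sohler framework. This is precisely the style of accounting already performed for \Thm{testable} in the planar case; the treewidth setting is strictly easier because the underlying hyperfinite partition is linear rather than quadratic in $\eps^{-1}$, which is exactly what produces the exponent $O(\eps^{-1}\log \eps^{-1})$ in place of $O(\eps^{-2})$.
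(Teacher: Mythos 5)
The two numerical ingredients you identify are exactly right and match the paper: (i) bounded treewidth graphs are $(\eps,O(\tau/\eps))$-hyperfinite (the paper's Lemma~\ref{lem:ASTtype}), which replaces the $O(\eps^{-2})$ bound for planar graphs, and (ii) the number of graphs of treewidth $\tau=O(1)$ on $k$ vertices is $\exp(O(k\log k))$ (the paper's Claim~\ref{clm:count:tw}). Plugging $k=O(\eps^{-1})$ into the machinery of Theorem~\ref{thm:testable} is indeed all the paper does.

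However, your description of the \emph{tester} is not what the paper (or Newman--Sohler) does, and the step you rely on is false. You propose sampling random vertices, running BFS to radius $r=\Theta(\eps^{-1})$, and comparing the resulting distribution over $r$-neighborhoods. You justify this by asserting that the $(\eps,O(\eps^{-1}))$-hyperfinite partition ``forces almost every sampled ball to lie inside a single component of the partition; hence each observed neighborhood has at most $\rho(\eps)$ vertices.'' Neither clause holds. Hyperfiniteness bounds the number of cut edges, not the radius-$r$ ball sizes: in a bounded-degree forest (treewidth~$1$), a radius-$r$ ball can have $d^{\Theta(r)} = \exp(\Theta(\eps^{-1}))$ vertices, so the per-sample query cost would itself be $\exp(\Theta(\eps^{-1}))$, not $\poly(d)\cdot\rho(\eps)$. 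Moreover there is no guarantee that a random vertex is far (in hop distance) from every cut edge---the set of vertices within distance $r$ of the $\eps dn$ cut edges can be all of $V$ when $r = \Theta(\eps^{-1})$---so the BFS ball generically escapes its partition component. The ``local--global equivalence'' you invoke as a black box is precisely the nontrivial thing one has to quantify, and it is not known to hold at radius $O(\eps^{-1})$.

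The paper avoids all of this by replacing BFS balls with \emph{partition-oracle components}: given a sampled vertex $v$, the oracle $\bA_{G,r,\eps}(v)$ of Claim~\ref{clm:po:on:tw} returns the component of a fixed $(\eps,O(\tau/\eps))$-hyperfinite partition containing $v$, in $\poly(d\eps^{-1})$ time, and these components (not balls) are what the count vector $\allct{\cP(G)}$ tallies. Because the components genuinely partition $V$ and have size $O(\eps^{-1})$ by fiat, both the per-sample cost and the distance argument (Claim~\ref{clm:dist}, Lemma~\ref{lem:cpi}) go through cleanly. To fix your proof, you should replace the BFS-ball estimator with the partition-oracle estimator of Claim~\ref{clm:approx}, instantiated with $k=O(\tau/\eps)$ and $|\cB_k| = \exp(O(\eps^{-1}\log\eps^{-1}))$; the rest of your accounting then yields the stated bound.
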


\begin{restatable}{theorem}{thmtwlb} \label{thm:main:tw:lb}
For every sufficiently large $n$, there exists a bounded-degree forest $H$ on $n$ vertices 
such that property testing $\Pi = \{H\}$ requires $\exp(\Omega(\eps^{-1}))$ queries.
\end{restatable}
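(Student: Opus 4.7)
The plan is to apply Yao's minimax principle. Set $s = c/\eps$ for a sufficiently small constant $c$ (depending on $d$), and let $\cT = \{T_1, \dots, T_N\}$ be a family of pairwise non-isomorphic bounded-degree rooted trees on $s$ vertices, with $N = \exp(\Omega(s))$; such a family exists because the number of non-isomorphic rooted binary trees on $s$ vertices is the Catalan number $\Theta(4^s/s^{3/2})$. The forest $H$ is the disjoint union of one copy of each $T_i$, so $n = sN$ and ``sufficiently large $n$'' corresponds to $n \geq \exp(\Omega(\eps^{-1}))$. For the two distributions, let $\cD_{\mathrm{yes}}$ be $H$ with a uniformly random vertex-relabeling, and let $\cD_{\mathrm{no}}$ be a uniformly random relabeling of a disjoint union of $N$ independent uniform samples from $\cT$.

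First I would verify the distance condition: in $\cD_{\mathrm{no}}$ the multiset of component types is an i.i.d.\ sample of size $N$ from a population of $N$ types, so by a Poisson-type concentration at least $\Omega(N)$ types from $\cT$ are absent with high probability. Since the trees in $\cT$ are pairwise non-isomorphic, fixing each missing type requires at least one edge edit when matching components of an instance from $\cD_{\mathrm{no}}$ to those of $H$, so the edit distance between the two is $\Omega(N) = \Omega(n/s)$. This exceeds $\eps d n$ provided $c$ is chosen small enough in terms of $d$, so $\cD_{\mathrm{no}}$ is supported on graphs $\eps$-far from $\Pi = \{H\}$.

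The core of the argument is the query lower bound: I would show that for any deterministic algorithm making $q = \exp(o(\eps^{-1}))$ queries, the total variation distance between the induced view distributions under $\cD_{\mathrm{yes}}$ and $\cD_{\mathrm{no}}$ is at most $1/3$. The only statistic distinguishing the two is the multiset of component isomorphism types: $\cD_{\mathrm{yes}}$ has every type exactly once (no collisions), while in $\cD_{\mathrm{no}}$ two randomly chosen components collide with probability $1/N$. Under the uniformly random relabeling, the tester's $q$ queries effectively amount to sampling components of the input for exploration, and a collision can only be witnessed by identifying two explored components as the same tree. A birthday-style argument then demands that the tester explore $\Omega(\sqrt N) = \exp(\Omega(\eps^{-1}))$ components; since identifying the isomorphism type of a size-$s$ tree from neighbor queries starting at a random vertex costs $\Omega(s)$ queries, the total query count is $q = \Omega(s\sqrt N) = \exp(\Omega(\eps^{-1}))$.

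The main technical obstacle is rigorously precluding an adaptive tester that pools partial information across many components rather than fully identifying any individual one. I would address this either by (i) choosing $\cT$ so that pairs of trees share a long common substructure with identifying information localized to an $\Omega(s)$-query-deep region, forcing every distinguishing query to lie in a small fraction of vertices, or (ii) a reduction to canonical component-sampling testers that either fully explore the component of each queried vertex or move to a fresh random vertex, which costs only a polynomial-in-$s$ factor and reduces the problem to pure multiset statistics. Either route turns the lower bound into a birthday-paradox calculation on i.i.d.\ samples from a size-$N$ family of trees, delivering the claimed $\exp(\Omega(\eps^{-1}))$ bound.
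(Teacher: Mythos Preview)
Your approach is essentially the paper's: construct a large family of $O(\eps^{-1})$-vertex trees, let $H$ be the disjoint union of one copy each, and reduce to a birthday/uniformity-testing lower bound over component samples. A few remarks on where your write-up is looser than the paper's argument.

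\textbf{The farness step.} Your sentence ``fixing each missing type requires at least one edge edit when matching components'' silently assumes the optimal edit sequence respects the component structure. It need not: edits may split and merge components, so one must rule out that many missing types can be manufactured cheaply by recombining pieces. The paper handles this through its \emph{suitable family} machinery (\Def{eps:suitable} and \Clm{main:item2}): any component that is touched either (i) remains essentially intact and must be rewritten into a different type, costing $\Omega(1)$ edits, or (ii) is cut by a balanced separator, which for size-$t=1/\eps$ trees is trivially $\Omega(\eps t)=\Omega(1)$ edges. Either way each touched component contributes $\Omega(\eps t)$ edits, and at least a constant fraction of components must be touched. Your conclusion is correct, but this case analysis is the actual content.

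\textbf{The ``obstacle''.} What you flag as the main technical obstacle---an adaptive tester pooling partial information across components---is not one. Because both input distributions apply a uniformly random labeling, any fresh vertex id the tester names lands in an (almost) uniformly random component independent of everything seen so far; this is exactly your option~(ii) and is \Clm{distinguish} in the paper. There is no need for option~(i) or for the $\Omega(s)$-queries-per-component accounting: simply hand the tester the entire component containing each queried vertex for free, and the problem reduces to sample complexity over component types.

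\textbf{Minor differences.} Your NO distribution (i.i.d.\ component types) works but needs its own TV-distance calculation; the paper instead takes two copies of a random half of the family, which plugs directly into the standard uniformity-testing lower bound. Also, components of a forest are \emph{unrooted} trees, so your Catalan count of rooted binary trees must be deflated by the number of possible roots; the paper does this explicitly (orbits of size at most $s$), still leaving $\exp(\Omega(s))$ non-isomorphic unrooted trees.
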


\subsection{Main ideas} \label{sec:ideas}

One of the key components of property testers for planar graphs is the notion
of \emph{partition oracles for hyperfinite graphs}, as introduced by Hassidim-Kelner-Nguyen-Onak \cite{HKNO}. 
Bounded degree planar graphs (and more generally, minor-closed families) are hyperfinite,
meaning that one can remove a constant fraction of edges and obtain connected components of constant size.
Specifically, one can remove $\eps dn$ edges to get connected components of size $O(\eps^{-2})$.

Suppose $G$ is hyperfinite. A partition oracle is a local procedure that gives access to a hyperfinite decomposition to a graph $G$, without
any preprocessing. Given a query vertex $v$, the partition oracle outputs a connected subgraph $C(v)$ of size $O(\eps^{-2})$,
such that the union of components $C(v)$ forms a hyperfinite partition. Recent results give
a partition oracle that runs in time $\poly(\eps^{-1})$ per query~\cite{LR15, KSS:21}.

Newman-Sohler used partition oracles to prove that all hyperfinite (and hence planar) properties
are testable~\cite{NS13}. This is where our results begin. Stripping down the Newman-Sohler arguments to their core, 
we can essentially treat a planar $G$, up to $\eps dn$ edge changes, as a distribution over connected planar graphs
of size $O(\eps^{-2})$. 
The partition oracle allows us to sample efficiently from this distribution. (If the input graph $G$ is not hyperfinite,
the partition oracle can detect that efficiently, and the graph can be directly rejected.)
For convenience, let $\cD(G)$ denote this distribution.

Existing theorems in combinatorics
show that the number of planar graphs of size $O(\eps^{-2})$ is $\exp(O(\eps^{-2}))$~\cite{MR01}; hence,
this bounds the support size of $\cD(G)$. We can learn an approximation of $\cD(G)$
up to $\eps$ TV-distance with $\exp(O(\eps^{-2}))$ queries to the partition oracle.
It is not hard, but central to the argument, to show that if $\|\cD(G) - \cD(H)\|_1 \leq \eps$,
then $G$ is $\eps$-close to $H$.
For any property $\Pi$, one can simply check exhaustively if the learned approximation to $\cD(G)$ is close to $\cD(H)$ for any $H \in \Pi$.
(While this could be expensive in running time, it requires no further queries to $G$.)

The main challenge is in the lower bound. We need to find a property $\Pi$ such that 
testing $\Pi$ requires $\exp(\Omega(\eps^{-2}))$ samples from $\cD(G)$. We seem to need a converse to
the upper bound argument, showing if $\|\cD(G) - \cD(H)\|_1$ is large, then $G$ and $H$ are far from each other.
But this is false in general! The support of $\cD(G)$ is a set of graphs, which are mutable objects. Meaning,
we can modify $\cD(G)$ dramatically by only modifying a few edges of $G$. (As an extreme case, $\cD(G)$ and $\cD(H)$
could have disjoint supports on graphs that are close to each other.)

All our graphs (both input and hard instance) will be collections of connected components of size $O(\eps^{-2})$.
We begin by trying to construct a graph $H$ such that: for all $G$, $\|\cD(G) - \cD(H)\|_1 \geq \eps$
implies that $G$ is $\eps$-far from $H$. Moreover, determining if $\|\cD(G) - \cD(H)\|_1 \geq \eps$
should require $\exp(\Omega(\eps^{-2}))$ samples from $\cD(G)$. A candidate is suggested
by the problem of uniformity testing of distributions. Suppose we construct $H$ where $\cD(H)$ is a uniform
distribution on a collection $\cF$ of graphs (each member of which has size $O(\eps^{-2})$),
such that $|\cF| = \exp(\Omega(\eps^{-2}))$. By standard distribution testing arguments,
distinguishing the uniform distribution on $\cF$ from a uniform distribution on \emph{half}
of $\cF$ requires $\sqrt{|\cF|} = \exp(\Omega(\eps^{-2}))$ samples.
Suppose furthermore that all graphs in $\cF$
are $\eps$-far from each other and all balanced separators of $F \in \cF$ are at least of size $\eps |F|$.
We prove that any graph $G$ such that $\cD(G)$ is supported on half of $\cF$ must be $\eps$-far from $H$.
Moreover, we can construct candidate input graphs $G$, where any property tester can be simulated
by sampling from $\cD(G)$.
Putting all the arguments together, we get a bonafide lower bound: property testing $\Pi = \{H\}$
requires $\exp(\Omega(\eps^{-2}))$ queries.

What remains is the main technical construction. We need to design $\cF$, the ``suitable family"
of graphs for the lower bound. We achieve this family by taking the $\eps^{-1} \times \eps^{-1}$
grid and adding a collection of diagonal edges. We apply a cleaning procedure to get a large
family of graphs that are all far from each other. The separator size holds trivially,
since each graph in $\cF$ contains a large grid.

\subsection{Related Work} \label{sec:related}
Property testing on bounded-degree graphs is a vast topic and we direct the reader to Chapter 9 of Goldreich's book~\cite{Golbook} for an introduction to the subject. 

Arguably, the starting point for testing planarity is the seminal work of Benjamini-Schramm-Shapira \cite{BSS08}, who showed that all minor-closed properties are testable. This paper also
introduced the significance of hyperfiniteness to property testing. Hassidim-Kelner-Nguyen-Onak~\cite{HKNO} introduced the concept of partition oracles, a key tool in property testing for hyperfinite classes.
Improving the query time of partition oracles was addressed
by Edelman-Hassidim-Nguyen-Onak~\cite{EHNO11}, Levi-Ron~\cite{LR15} and Kumar-Seshadhri-Stolman~\cite{KSS:21}.

The quest for characterizing testable (bounded-degree) graph properties and finding properties testable in $\poly(\eps^{-1})$ is an important theme
in property testing. We note that Goldreich's recent survey explicitly calls these out as Open Problems
2.2 and 2.3~\cite{Gol21}. One of the main inspirations for our work is the result
of Newman-Sohler that proves that \emph{all} hyperfinite properties are testable~\cite{NS13}.
The recent work of~\cite{KSS:21} proves that all additive and monotone minor-closed properties
are efficiently testable, and Levi-Shoshan show that Hamiltonicity of minor-closed families
is also efficiently testable~\cite{LS21}. 

We note an important line of work of testing properties of planar graphs, in the \emph{unbounded
degree case}. This direction was pioneered by Czumaj-Monemizadeh-Onak-Sohler~\cite{CMOS11},
who showed the bipartitenesss is testable (independent of the size). A recent
result of Czumaj-Sohler prove that all testable planar properties in the unbounded degree setting
are related to subgraph freeness~\cite{CS19}.

\section{Proof of \Thm{testable} and \Thm{main:tw}} \label{sec:up}

We begin with some preliminaries: the notions of hyperfiniteness and partition oracles. Note
that all graphs have $n$ vertices and degree bound $d$.

\begin{definition} \label{def:hf} A graph $G$ is $(\eps, k)$ hyperfinite
if there exists a subset of $\eps dn$ edges whose removal results in connected components 
of size at most $k$.
\end{definition}

A classic result of Alon-Seymour-Thomas (Prop. 4.1 of~\cite{AST:94}) shows that all minor-closed
families are $(\eps, O(\eps^{-2}))$-hyperfinite. In \Lem{ASTtype}, we prove that all bounded treewidth
families are $(\eps, O(\eps^{-1}))$-hyperfinite. 
A partition oracle gives local access to a hyperfinite decomposition. We give the formal
definition below (adapted from Def. 1.1 of~\cite{KSS:21}).

\begin{definition} \label{def:oracle} 
A procedure $\bA$ is a partition oracle for a minor-closed family $\Pi$ of graphs 
if it satisfies the following properties. The deterministic procedure takes as input random access 
to $G = (V,E)$,
access to a random seed $r$, a proximity parameter $\eps > 0$, and a vertex $v$ of $G$.
(We will think of fixing $G, r, \eps$, so we use the notation $\bA_{G,r,\eps}$. All probabilities are with respect to $r$.)
The procedure $\bA_{G,r,\eps}(v)$ outputs a connected set of vertices, such that
the sets $\{\bA_{G,r,\eps}(v)\}$, over all $v$, form a partition of $V$. 

We say that the partition oracle outputs an $(\eps, k)$-hyperfinite decomposition if
the following properties hold. (i) For all $v$, $|\bA_{G,r,\eps}(v)| \leq k$
and (ii) with probability $> 2/3$ (over $r$),
the number of edges between the sets $\bA_{G,r,\eps}(v)$ is at most $\eps dn$.
\end{definition}

%
%
%
%
%
%
The main result of~\cite{KSS:21} that we use is the following.

\begin{theorem} [Rephrasing of Theorem 1.2 \cite{KSS:21}]\label{thm:kss:21}
	For any bounded degree graph in a minor-closed family, there is a partition oracle that outputs an $(\eps, O(\eps^{-2}))$-hyperfinite partition,
    and runs in time $O(\poly(d\eps^{-1}))$ per query.
\end{theorem}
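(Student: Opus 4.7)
Since the statement is labeled as a rephrasing of Theorem 1.2 of~\cite{KSS:21}, the plan is not to rederive the partition oracle from scratch but to check that its guarantees fit the formulation of \Def{oracle}. The first step is to identify the single structural input the KSS algorithm needs: by Alon-Seymour-Thomas (the same fact that underlies \Lem{ASTtype}), every bounded-degree graph in a minor-closed family is $(\eps, O(\eps^{-2}))$-hyperfinite, so the existential partition target of size $k = O(\eps^{-2})$ is available and matches the size guarantee promised by the theorem.

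Second, I would invoke~\cite[Theorem 1.2]{KSS:21} as a black box. That theorem produces, from random access to $G$, a random seed $r$, and a query vertex $v$, a connected set $C_r(v)$ of size $O(\eps^{-2})$ in $\poly(d\eps^{-1})$ time per query, with the guarantee that the sets $\{C_r(v)\}_v$ partition $V$ and that the expected number of edges cut between the sets is $O(\eps dn)$. Setting $\bA_{G,r,\eps}(v) := C_r(v)$ immediately satisfies the partition requirement, the connectedness requirement, and the per-set size bound in \Def{oracle}. A Markov step on the cut-edge count, absorbed by a constant rescaling of the oracle's $\eps$, upgrades the expectation bound to the ``probability at least $2/3$'' statement demanded by \Def{oracle}(ii); the per-query running time bound is inherited verbatim.

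The genuine technical obstacle here, namely reducing the per-query cost from the exponential $d^{\Theta(\eps^{-2})}$ of a naive BFS-based local partition to $\poly(d\eps^{-1})$ via an interleaved local ``cleaning'' procedure, is precisely the actual contribution of~\cite{KSS:21}, and I would not attempt to reproduce it. Since the formulation above introduces no new parameters or constraints beyond the rephrasing, the plan terminates once the black-box invocation and the constant-rescaling Markov step have been made explicit.
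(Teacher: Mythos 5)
Your proposal is correct and matches the paper's treatment: \Thm{kss:21} is presented as a rephrasing of Theorem 1.2 of~\cite{KSS:21} and carries no proof in the paper; it is invoked purely as a black box. Your additional bookkeeping (checking the Alon-Seymour-Thomas hyperfiniteness baseline, verifying that the KSS oracle's output satisfies \Def{oracle}, and noting that a Markov-plus-rescaling step suffices if the source theorem were only stated in expectation) is sound and, if anything, more explicit than the paper needs to be; the paper simply trusts that the cited theorem already matches the definition as stated.
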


For bounded treewidth classes, the partition oracle can output $(\eps, O(\eps^{-1}))$-hyperfinite partitions
using standard arguments  (\Clm{po:on:tw})

\subsection{Property testers through subgraph count vectors} \label{sec:count}

We use partition oracles to summarize a hyperfinite graph by an approximate count vector.
This technique was first used by Newman-Sohler~\cite{NS13}. We follow their analysis, but 
take care to keep track of various dependencies on $\eps$. This allows for getting
the optimal query complexity bound.

Consider any planar property $\Pi$ (technically, the following arguments hold for any minor-closed property).
For any planar $G$, let $\cP(G)$ denote the partition given by the partition oracle of \Thm{kss:21}.
(Note that the partition is a random variable.) 
%
%
Let $\cB_k$ be the set of unlabeled graphs in the family $\Pi$ with at most $k$ vertices.
We will always set $k =  O(\eps^{-2})$ (though the exact setting will change for bounded-treewidth properties).

For any graph $G'$ consisting of connected components of size at most $k$ and any $F \in \cB_k$,
let $\ct{G'}{F}$ be the number of occurrences of $F$ in $G'$. Let $\allct{G'}$ be the $|\cB_k|$-dimensional vector
of these counts. Note that $\|\allct{G'}\|_1 \in [n/k, n]$.

\begin{claim} \label{clm:dist} Consider two graphs $G'_1, G'_2$ consisting entirely of connected
	components of size at most $k$. If $\|\allct{G'_1} - \allct{G'_2}\|_1 \leq \gamma n$,
	then $\dist(G'_1, G'_2) \leq \gamma kd$.
\end{claim}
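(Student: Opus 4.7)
The plan is to exhibit a relabeling $\sigma$ of $[n]$ that matches as many components as possible, and then show that all edge edits are confined to the unmatched vertex set. Since property-testing distance is taken up to isomorphism, one has $\dist(G'_1, G'_2) = \min_\sigma |E(G'_1) \triangle E(\sigma(G'_2))|/(dn)$, so exhibiting a single good $\sigma$ suffices. Writing $a_F := \ct{G'_1}{F}$ and $b_F := \ct{G'_2}{F}$ for each unlabeled $F \in \cB_k$, the hypothesis becomes $\sum_F |a_F - b_F| \leq \gamma n$.

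First, for each $F$, I would greedily pair $\min(a_F, b_F)$ copies of $F$ in $G'_1$ with the same number in $G'_2$, and define $\sigma$ on each matched pair via some isomorphism of $F$ between the two vertex sets. Let $U_1, U_2 \subseteq [n]$ denote the unmatched vertex subsets; since the matched components cover the same number of vertices on both sides, $|U_1| = |U_2|$, and I can complete $\sigma$ by any bijection $U_1 \to U_2$. By construction every matched component of $G'_1$ is identified with a copy of itself in $\sigma(G'_2)$, and because components are vertex-disjoint there are no cross-edges between matched and unmatched regions; thus every edge of $E(G'_1) \triangle E(\sigma(G'_2))$ has both endpoints in $U_1$.

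Finally, I would bound this symmetric difference by the total edge count inside unmatched components. The number of unmatched components across the two graphs is $\sum_F (a_F-b_F)^+ + \sum_F (b_F-a_F)^+ = \sum_F |a_F - b_F| \leq \gamma n$; each has at most $k$ vertices, hence at most $kd/2$ edges under the degree bound. Therefore $|E(G'_1[U_1])| + |E(G'_2[U_2])| \leq \gamma n \cdot kd/2$, which upper bounds the symmetric difference. Dividing by $dn$ gives $\dist(G'_1, G'_2) \leq \gamma k/2 \leq \gamma kd$, matching (in fact beating) the stated bound.

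The one mild subtlety---not really an obstacle---is that the bijection $U_1 \to U_2$ cannot in general respect component boundaries, since unmatched components on the two sides may have different sizes. This is harmless: the edge-count bound above sums over all edges inside $U_1$ and $U_2$ separately, independently of how $\sigma$ redistributes individual vertices within these regions. The factor of $d$ slack in the claim comes precisely from being loose in this final estimate, so even a cruder matching suffices.
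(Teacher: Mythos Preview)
Your proof is correct and is essentially the same argument as the paper's: both identify the $\sum_F |a_F - b_F| \leq \gamma n$ ``excess'' components and bound the edit cost by the total number of edges they contain, at most $kd$ each. The paper phrases this as deleting the excess instances from whichever side has more, while you phrase it as an explicit bijection that matches the common instances and leaves the rest unmatched; your version is slightly more careful with constants (yielding $\gamma k/2$ rather than $\gamma k$), but the idea is identical.
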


\begin{proof} For every $F \in \cB_k$, we will modify $G'_1$ and $G'_2$ to equalize
	the $\ct{G'_1}{F}$ and $\ct{G'_2}{F}$. We simply delete $|\ct{G'_1}{F} - \ct{G'_2}{F}|$
	instances from either $G'_1$ or $G'_2$ (whichever has the larger count). This operation
	deletes at most $|\ct{G'_1}{F} - \ct{G'_2}{F}|kd$ edges. In total,
	the number of edges deleted is at most $\|\allct{G'_1} - \allct{G'_2}\|_1 kd \leq \gamma nkd$.
\end{proof}

For the property $\Pi$ that we wish to test, construct the following set $\bC_\Pi$ of count vectors:
for every $F \in \Pi$ (with $n$ vertices and degree bound $d$) and every subgraph $F'$ that
is an $(\eps, k)$ partition of $F$, add $\allct{F'}$ to $\bC_\Pi$.

\begin{lemma} \label{lem:cpi} Suppose $\cP(G)$ is a valid $(\eps,k)$-partition of $G$. 
	If $G \in \Pi$, then $\allct{\cP(G)} \in \bC_\Pi$. If $G$ is $3\eps$-far from $\Pi$,
	then $\forall \bv \in \bC_\Pi$, $\|\allct{\cP(G)} - \bv\|_1 > \eps n/kd$.
\end{lemma}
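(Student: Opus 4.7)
The plan is to prove the two parts separately. The first is essentially unpacking the definition of $\bC_\Pi$, and the second is a three-term triangle-inequality argument combined with \Clm{dist}.

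\textbf{First part.} Assume $G \in \Pi$. By hypothesis, $\cP(G)$ is a valid $(\eps, k)$-partition of $G$. The set $\bC_\Pi$ was built by ranging over \emph{every} graph $F \in \Pi$ and \emph{every} $(\eps,k)$-partition $F'$ of $F$, adding the count vector $\allct{F'}$. Taking the witnesses $F := G$ and $F' := \cP(G)$ shows immediately that $\allct{\cP(G)} \in \bC_\Pi$.

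\textbf{Second part.} I will prove the contrapositive: if there exists $\bv \in \bC_\Pi$ with $\|\allct{\cP(G)} - \bv\|_1 \le \eps n/(kd)$, then $G$ is not $3\eps$-far from $\Pi$. Write $\bv = \allct{F'}$ for some $F \in \Pi$ and some $(\eps,k)$-partition $F'$ of $F$ (such a choice exists by the very definition of $\bC_\Pi$). By the triangle inequality for the distance $\dist$,
\[
\dist(G, F) \;\le\; \dist(G, \cP(G)) \;+\; \dist(\cP(G), F') \;+\; \dist(F', F).
\]
The first and third terms are at most $\eps$ each, directly from \Def{hf}: converting $G$ into $\cP(G)$ (resp.\ $F$ into $F'$) requires deleting at most $\eps dn$ edges, which is exactly $\eps$ after the $dn$ normalization. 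For the middle term, apply \Clm{dist} with $\gamma = \eps/(kd)$: the hypothesis rewrites as $\|\allct{\cP(G)} - \allct{F'}\|_1 \le \gamma n$, and the conclusion of the claim then gives $\dist(\cP(G), F') \le \gamma kd = \eps$. Summing yields $\dist(G, F) \le 3\eps$, contradicting $3\eps$-farness of $G$ from $\Pi$.

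\textbf{Where the work (isn't).} I do not expect a real technical obstacle; the proof is a careful bookkeeping exercise. The only subtle point is aligning the three normalizations in play — the $1/(dn)$ in the definition of ``$\eps$-far'', the factor of $k$ coming from components of size at most $k$ in \Clm{dist}, and the $1/(kd)$ scaling chosen in the hypothesis of part~(ii) — so that each of the three triangle-inequality terms contributes exactly $\eps$. Once this alignment is made, the lemma follows from the definition of $\bC_\Pi$ and a single invocation of \Clm{dist}.
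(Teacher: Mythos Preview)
Your proof is correct and follows essentially the same approach as the paper: the first part is definitional, and the second part is the same three-term triangle inequality on $\dist$ combined with a single application of \Clm{dist}. The only cosmetic difference is that you argue the second part by contrapositive, whereas the paper argues it directly (bounding $\dist(\cP(G),F')$ from below and then invoking the contrapositive of \Clm{dist}); the content is identical.
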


\begin{proof} Suppose $G \in \Pi$. In the construction of $\bC_\Pi$ described above,
	we can select $F$ as $G$ and $F'$ as $\cP(G)$. Hence, $\allct{\cP(G)} \in \bC_\Pi$.

	Suppose $G$ is $3\eps$-far from $\Pi$. Consider any $\bv \in \bC_\Pi$,
	so $\bv = \allct{F'}$, for $F'$ being an $(\eps, k)$-partition of $F \in \Pi$.
	By triangle inequality, $\dist(\cP(G), F') \geq \dist(G,F) - \dist(G,\cP(G)) - \dist(F,F')$.
	By farness, $\dist(G,F) \geq 3\eps$. Because $\cP(G)$ and $F'$ are respective $(\eps,k)$-partitions,
	$\dist(G,\cP(G))$ and $\dist(F,F')$ are at most $\eps$. Hence, $\dist(\cP(G),F') \geq \eps$.
	By \Clm{dist}, $\|\allct{\cP(G)} - \bv\|_1 > \eps n/kd$.
\end{proof}

Now, we present the main algorithmic ingredient of our property tester.

\begin{claim} \label{clm:approx} 
	Given $G$ and a setting of 
	$\cP(G)$ that is a valid $(\eps,k)$-partition, using $O(\poly(d\eps^{-1})\delta^{-2} |\cB_k|^3 \log{|\cB_k|})$ 
	queries one can compute a vector $\bv$ such that 
	$\|\bv - \allct{\cP(G)}\|_1 < \delta n$ with probability at least $1-\frac{1}{|\cB_k|}$. 
\end{claim}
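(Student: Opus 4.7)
The plan is to estimate $\allct{\cP(G)}$ by uniform vertex sampling and then reading off which isomorphism class each sampled component belongs to. Concretely, I will pick $v_1,\dots,v_s$ independently and uniformly from $V$, call $\bA_{G,r,\eps}(v_i)$ to retrieve the component $C_i$ containing $v_i$, spend $O(kd)$ additional neighbor queries per sample to expose the subgraph induced on $C_i$, and then (without further queries) brute-force test $C_i \cong F$ for each $F\in\cB_k$. Setting $\hat p_F = \frac{1}{s}|\{i : C_i\cong F\}|$, I will output the coordinate $v_F = \hat p_F \cdot n/|F|$. The point of this definition is that, since a uniformly random vertex lands in a fixed component $C$ with probability $|C|/n$, the expected value of $\hat p_F$ is $p_F := |F|\cdot\ct{\cP(G)}{F}/n$, and $\{p_F\}_{F\in\cB_k}$ is a bona fide probability distribution (the entries sum to $\sum_F |F|\,\ct{\cP(G)}{F}/n = n/n = 1$). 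In particular $\Esymb[v_F] = \ct{\cP(G)}{F}$ exactly.

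For the concentration step I will apply a Chernoff/Hoeffding bound to each indicator average $\hat p_F$. With $s = \Theta(|\cB_k|^2 \delta^{-2}\log|\cB_k|)$ samples, a per-coordinate bound gives $\Pr[\,|\hat p_F - p_F| > \delta/(2|\cB_k|)\,] < 1/|\cB_k|^2$, and a union bound over the $|\cB_k|$ templates drives the overall failure probability below $1/|\cB_k|$. On the good event, using $|F|\geq 1$, I get $|v_F - \ct{\cP(G)}{F}| = (n/|F|)|\hat p_F - p_F| \leq \delta n/(2|\cB_k|)$ for every $F$, and summing over the $|\cB_k|$ coordinates yields $\|\bv - \allct{\cP(G)}\|_1 \leq \delta n/2 < \delta n$, exactly the claimed bound.

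For the query budget, each sample uses one partition oracle call, which by \Thm{kss:21} takes $\poly(d\eps^{-1})$ queries, plus $O(kd)$ edge-exposure queries; multiplying by $s$ gives $\poly(d\eps^{-1})\cdot|\cB_k|^2 \delta^{-2}\log|\cB_k|$ total queries, safely inside the claimed $\poly(d\eps^{-1})\delta^{-2}|\cB_k|^3\log|\cB_k|$. I do not expect a real technical obstacle here: this is essentially ``learn a distribution over a universe of size $|\cB_k|$ in $\ell_1$ distance" via standard empirical estimation. The one subtlety to be careful about is the two independent sources of randomness—the oracle seed $r$ and the sampling coins—but since the statement conditions on $\cP(G)$ being a valid $(\eps,k)$-partition (a property of $r$ alone), I can fix $r$ throughout and treat $\bA_{G,r,\eps}$ as a deterministic mapping, so the Chernoff calculation only needs to reason about the sampling randomness.
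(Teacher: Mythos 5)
Your proof is correct and takes essentially the same approach as the paper: uniform vertex sampling, identifying the isomorphism class of the retrieved component, and a per-coordinate Chernoff--Hoeffding bound with a union bound over $\cB_k$. The one minor difference is that you reuse a single batch of $s = \Theta(|\cB_k|^2\delta^{-2}\log|\cB_k|)$ samples for all $F$, yielding a total of $\poly(d\eps^{-1})|\cB_k|^2\delta^{-2}\log|\cB_k|$ queries, whereas the paper (implicitly) draws a fresh batch for each $F$ and therefore lands on the looser $|\cB_k|^3$ figure in the claim---so your argument actually establishes a slightly stronger bound that still sits comfortably inside the stated one.
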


\begin{proof}  
	Fix $F \in \cB_k$. We show how to approximate $\ct{\cP(G)}{F}$. Pick uar vertex $s$,
	and using the partition oracle, determine the component of $\cP(G)$ containing $s$. If the component
	is isomorphic to $F$, declare success. The probability of success is exactly $|F|\ct{\cP(G)}{F}/n$.
	By Chernoff-Hoeffding, we can get an additive $\delta/|\cB_k|$ estimate with error 
	probability $< |\cB_k|^{-2}$ using $O(|\cB_k|^2 \delta^{-2} \log |\cB_k|)$ samples. 
	Thus, we get an additive $\delta n/(|F|\cdot |\cB_k|) \leq \delta n/|\cB_k|$
	estimate for $\ct{\cP(G)}{F}$. Applying for all $F$, we get our estimate vector $\bv$. 
	Note that the total approximation is 
	$\|\bv - \allct{\cP(G)}\|_1 < \delta n/|\cB_k| \times |\cB_k| = \delta n$.
	The error probability, by union bound, is at most $|\cB_k|^{-1}$. By the running time bound
    of the partition oracle, the total number of queries made is 
	$O(\poly(d\eps^{-1})\delta^{-2}|\cB_k|^3\log |\cB_k|)$.
\end{proof}

Now, we prove \Thm{testable}, repeated for convenience. To get the optimal bound
for planar properties, we need the right upper bound for $\cB_k$. Numerous
results in the past show that the answer is $\exp(O(k))$. We cite one specific reference.

\begin{theorem} \label{thm:mr} (Theorem 5 of~\cite{MR01}) A planar graph on $k$
vertices and $k'$ edges can be represented uniquely by $8k + 2k' + o(k+k')$ bits.

Hence, the number of planar graphs on $k$ vertices is $\exp(O(k))$.
\end{theorem}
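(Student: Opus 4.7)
The plan is to establish the bit-representation bound first, and then derive the enumeration bound as a direct corollary via Euler's formula. For the encoding, I would use a succinct planar-graph representation built on top of a canonical combinatorial embedding. Concretely, I would first normalize: fix an arbitrary $k$-vertex planar graph $G$, augment it to a triangulation $\widehat G$ by adding at most $O(k)$ ``virtual'' edges (maintaining planarity), and record which edges are real using a bit-mask of length $|E(\widehat G)|$. A triangulated planar graph has a canonical combinatorial embedding (unique up to orientation and choice of outer face, which we fix by some canonical rule on the vertex labels), so the decoder can recover the cyclic order of edges around each vertex from the abstract graph.

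Next, on the canonical embedding I would compute a canonical spanning tree $T$ of $\widehat G$ (e.g., the lex-first DFS tree rooted at the canonical outer-face vertex). Two balanced-parentheses sequences then capture the entire embedding. The \emph{primal} sequence traces $T$ in a canonical planar DFS and records each tree edge by a matched pair of parentheses; this gives exactly $2k$ bits for a spanning tree on $k$ vertices. The \emph{co-tree} sequence records, in the order they are encountered when walking around the faces of the embedding, the non-tree edges; using the fact that a planar embedding together with a spanning tree decomposes into two planar trees (the primal tree and its dual cotree), these non-tree edges can also be encoded by a balanced-parentheses sequence of length $2k'$, where $k'=|E(G)|$. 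Finally, one needs $O(k)$ bits of bookkeeping to mark the correspondence between vertex/edge occurrences in the two sequences (rank/select auxiliary data of size $o(k+k')$ via standard succinct-data-structure tricks), together with the virtual-edge mask. Summing everything gives a uniquely decodable representation of length $8k+2k'+o(k+k')$ bits.

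For the enumeration bound, I would invoke Euler's formula: any simple planar graph on $k\geq 3$ vertices has $k'\leq 3k-6$. Substituting, the representation length is at most $8k+2(3k-6)+o(k)=O(k)$ bits, so the number of distinct code words, and hence of distinct (unlabeled) planar graphs on $k$ vertices, is at most $2^{O(k)}=\exp(O(k))$. The main obstacle is the canonicity of the encoding: to guarantee \emph{unique} decoding (and hence a clean injection from planar graphs to bit strings) one must pin down a canonical embedding, a canonical choice of root and outer face, and a canonical spanning tree for unlabeled graphs. This is handled by breaking symmetries with a canonical labeling (e.g., via the automorphism-minimizing vertex numbering), at the cost of an additive $o(k+k')$ term but no change to the leading constants, which is absorbed into the stated bound.
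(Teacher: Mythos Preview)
The paper does not prove this statement at all: it is quoted as Theorem~5 of~\cite{MR01} and used as a black box. Only the corollary ``the number of planar graphs on $k$ vertices is $\exp(O(k))$'' is actually invoked (in the query-complexity bound of \Thm{testable}), and your derivation of that corollary from any $O(k)$-bit encoding via Euler's inequality $k'\le 3k-6$ is exactly right.

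Regarding the encoding itself, your sketch is in the standard spirit (triangulate, take the canonical embedding of the $3$-connected triangulation, encode the primal spanning tree and the dual cotree by balanced-parentheses sequences), but the bookkeeping does not add up to the stated $8k+2k'$. You first pass to the triangulation $\widehat G$ with $3k-6$ edges, yet then claim the cotree sequence has length $2k'$ with $k'=|E(G)|$; the cotree you actually encode lives in $\widehat G$, so its size is $|E(\widehat G)|-(k-1)\approx 2k$, independent of $k'$. With your mask of length $|E(\widehat G)|\approx 3k$, the primal $2k$ bits, and the cotree $\approx 4k$ bits, you get $\Theta(k)$ bits total but not the advertised leading constants, and in particular no $2k'$ term survives. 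This is harmless for the enumeration corollary (which only needs $O(k)$), but it is a genuine gap if you are aiming to reproduce the precise $8k+2k'+o(k+k')$ bound of~\cite{MR01}; for that you would need to encode $G$ directly (without triangulating) and keep the dependence on $k'$ explicit, as in the Tur\'an/Keeler--Westbrook style encodings that~\cite{MR01} refines. Also note that invoking a canonical (automorphism-minimizing) labeling to make the code unique for \emph{unlabeled} graphs is fine for the counting application, but it is not part of the cited data-structural result, which concerns labeled planar graphs.
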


\thmtestable*

\begin{proof} The tester has two phases. In the first phase, it uses \Thm{kss:21} to get
a hyperfinite partition. If it succeeds, then the tester proceeds to the second phase.
It uses \Clm{approx} to approximate $\allct{\cP(G)}$ and checks if it lies in $\bC_\Pi$. 

%
The tester repeats the following $O(1)$ times. For a random seed $R$,
it sets up the partition oracle with this seed, and then estimates the number of cut edges via random sampling.
This is done by sampling $\Theta(\eps^{-1})$  uar vertices $u$, picking a uar neighbor $v$ of $u$, and calling the partition oracle on $u$ and $v$. 
If the outputs are different components, then edge $(u,v)$ is cut. If more than an $\eps/4$-fraction of edges are cut,
the process is repeated with a new choice of $R$. Otherwise, $R$ is fixed, and the tester proceeds to the next phase.
If no $R$ is found, the tester rejects.

At this point, a suitable $\cP(G)$ has been discovered. We set $k = O(\eps^{-2})$, to
be the size bound of components, as promised by \Thm{kss:21}. Using \Clm{approx},
the tester computes an approximate
count vector $\bv$ of $\allct{\cP(G)}$, setting $\delta = \eps/4kd$. 
Note that $\bC_\Pi$ is a fixed set of vectors, independent of the input.
The tester
then determines if $\exists \bw \in \bC_\Pi$ such that $\|\bv - \bw\|_1 \leq \eps n/2kd$. If
such a vector $\bw$ exists, it accepts. Otherwise, it rejects.
The description of the property tester is complete, and we proceed to the analysis.

{\bf Query complexity bound:} Note that $k = O(\eps^{-2})$, $\delta = \eps/4kd$, 
and, by \Thm{mr}, $|\cB_k| = \exp(O(\eps^{-2}))$. Hence, by \Clm{approx}, the overall
query complexity is $\poly(d) \exp(O(\eps^{-2}))$.

{\bf Correctness analysis:} If $G$ is planar and hence $(\eps/8, O(\eps^{-2}))$-hyperfinite, with probability at least $2/3$ over $R$,
the partition oracle computes a hyperfinite decomposition. Hence, with probability at least $5/6$, the tester
proceeds to the second phase with $\cP(G)$ being a valid $(\eps, O(\eps^{-2}))$-hyperfinite partition.
Conversely, with probability at least $5/6$, if the tester finds a suitable $R$,
then $\cP(G)$ is a valid $(\eps, O(\eps^{-2}))$-hyperfinite partition.

Suppose $G \in \Pi$ and hence planar. As argued above, with probability $> 5/6$,
$\cP(G)$ is an $(\eps, O(\eps^{-2}))$-hyperfinite decomposition.
By \Clm{approx}, with 
probability $> 5/6$, $\|\bv - \allct{\cP(G)}\|_1 < \eps n/4kd$.
By the union bound, both conditions hold with probability at least $2/3$.
By \Lem{cpi}, $\allct{\cP(G)} \in \bC_\Pi$. Thus, there exists $\bw$
in $\bC_\Pi$ such that $\|\bv - \bw|_1 \leq \eps n/2kd$ and the tester accepts with probability at least $2/3$.

Suppose $G$ is $3\eps$-far from $\Pi$. By \Lem{cpi}, for all $\bw \in \bC_\Pi$,
$\|\allct{\cP(G)} - \bw\|_1 > \eps n/kd$. By the triangle inequality, 
$\|\bv - \bw\|_1 > \eps n/kd - \eps n/4kd > \eps n/2kd$.
Thus, the tester rejects with probability at least $2/3$.

%
%
%
%
\end{proof}

%
%
%
%
%

The proof of \Thm{main:tw} is nearly identical.

\thmtwub*
\begin{proof}
	We redo the proof of \Thm{testable}. \Clm{po:on:tw} shows that graphs with treewidth at most $\tau$ admit a 
	partition oracle with $k = O(\tau/\eps)$.
	For bounded treewidth graphs, we apply the trivial bound $|\cB_k| \leq 2^{O(k\log k)}$ (\Clm{count:tw}). 
    We apply these bounds in the above proof, and get a query complexity
    of $O(\poly(d) \exp(\eps^{-1}\log {\eps^{-1}}))$ queries.
\end{proof}

\section{Lower bounds through suitable families} \label{sec:suitable}

The key construct for the lower bounds is given in the following definition.
A graph class is monotone if it is closed under edge removals. A graph class
is additive if it closed under taking disjoint unions of graphs.
%

\begin{definition} \label{def:eps:suitable}
	Fix $\eps > 0$ and a monotone, additive graph class $\cC$. We call
	a family $\cF$ of distinct graphs (in $\cC$) \emph{$\eps$-suitable} if the following 
	conditions hold:
	\begin{asparaitem}
    \item All graphs in $\cF$ have the same number of vertices (denoted $t$).
	\item If $t > 2/\eps$, $\forall F,F' \in \cF$, $dist(F,F') \geq 0.02$.
	\item $\forall F \in \cF$, the size of any $(\frac{0.01}{d}, 1-\frac{0.01}{d})$ 
		balanced separator in $F$ is $\Omega(\eps t)$.
	\end{asparaitem}
\end{definition}

The main lemma used in our lower bound says that a large suitable family 
leads to a property testing lower bound.

\begin{restatable}{lemma}{abstractlb} \label{lem:main}
	Fix $\eps > 0$, and let $\cF$ denote an $\eps$-suitable family of a monotone, additive graph class $\cC$.
	Then, for all sufficiently large $n$, there exists a graph $H \in \cC$
    with the following property. Any property tester for the property $\{H\}$
    with proximity parameter $\eps$ requires $\Omega(\sqrt{|\cF|})$ queries.
\end{restatable}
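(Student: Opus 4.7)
My plan is a two-distribution Yao argument. I will set $m = \lfloor n/(t|\cF|) \rfloor$ and take $H$ to be the disjoint union of $m$ copies of each $F \in \cF$, padded to $n$ vertices; additivity of $\cC$ gives $H \in \cC$. For every subset $S \subseteq \cF$ of size $|\cF|/2$, I will define $G_S \in \cC$ as the disjoint union of $2m$ copies of each $F \in S$. The lemma will follow from two claims: (i) every $G_S$ is $\eps$-far from $H$, and (ii) no $o(\sqrt{|\cF|})$-query tester can distinguish a uniformly random labeling of $H$ from a uniformly random labeling of $G_S$ when $S$ is itself a uniformly random half-subset of $\cF$.

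\textbf{Farness (the main obstacle).} This is where the full force of the suitability definition is needed, and I expect it to be the principal technical difficulty: we must rule out transformations of $G_S$ into a copy of $H$ that cleverly merge or split components. I will fix any labeling $H'$ of $H$ onto $V(G_S) = [n]$, noting that both $H'$ and $G_S$ partition $[n]$ into $m|\cF|$ size-$t$ components. For each component $C'$ of $H'$ whose isomorphism type $F^*$ lies in $\cF \setminus S$ (there are exactly $m|\cF|/2$ such ``missing'' components), I will decompose $V(C') = \bigsqcup_i A_i$ by intersection with the components of $G_S$, and then show $|E(C') \triangle E(G_S[V(C')])| = \Omega(\eps t)$ by case analysis. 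If some $A_{i_0}$ has size exceeding $(1-\tfrac{0.01}{d})t$, let $C_{i_0}$ be its $G_S$-component; under the identity-on-$A_{i_0}$ bijection the labeled distance between $C'$ and $C_{i_0}$ is at most $|E(C'[A_{i_0}]) \triangle E(G_S[A_{i_0}])| + O(t)$, and combined with $\dist(F^*, \mathrm{type}(C_{i_0})) \geq 0.02$ from suitability this forces $|E(C'[A_{i_0}]) \triangle E(G_S[A_{i_0}])| = \Omega(dt)$. Otherwise no $A_i$ dominates, so I can regroup the $A_i$'s into a $(\tfrac{0.01}{d}, 1-\tfrac{0.01}{d})$-balanced bipartition of $V(C')$ with zero edges crossing it in $G_S$; the separator condition then produces $\Omega(\eps t)$ edges of $C'$ across this bipartition, all of which lie in the symmetric difference. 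Summing the per-component contribution over the $m|\cF|/2$ missing components yields $|E(G_S) \triangle E(H')| \geq \Omega(\eps n)$, and hence $\Omega(\eps)$-farness for constant $d$ (the constants in the suitability definition can absorb the stray $d$ factor so this is $\eps$-far as stated).

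\textbf{Indistinguishability via uniformity testing.} For (ii), I will let $\cD_{\mathrm{yes}}$ be a uniformly random labeling of $H$ and $\cD_{\mathrm{no}}$ a uniformly random labeling of $G_S$ with $S$ drawn uniformly from $|\cF|/2$-subsets. In either case the graph is a disjoint union of size-$t$ components drawn from $\cF$; only the component-type multiset differs between the two distributions. Under a random labeling, every fresh vertex the tester encounters lies in a uniformly random component, and a standard coupling will show that the tester's $q$-query answer sequence is close in total variation to what it would see from $q$ independent samples of the component-type distribution --- uniform over $\cF$ in the yes case and uniform over a random half of $\cF$ in the no case. Distinguishing these is the classical uniformity-versus-half-uniform testing problem on $|\cF|$ elements, which by Paninski's lower bound (equivalently, by a birthday-paradox collision argument) requires $\Omega(\sqrt{|\cF|})$ samples. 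With farness in hand this closes the Yao argument and delivers the claimed $\Omega(\sqrt{|\cF|})$ query lower bound.
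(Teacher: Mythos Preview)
Your proposal is correct and follows the same overall architecture as the paper: construct $H$ as an equal-multiplicity union over $\cF$, take the NO distribution to be random labelings of $G_S$ for a random half $S$, prove $\Omega(\eps)$-farness from the suitability conditions, and reduce indistinguishability to the $\Omega(\sqrt{|\cF|})$ lower bound for uniform-vs-random-half distribution testing (the paper's \Clm{main:item2} and \Clm{distinguish} are exactly your steps (i) and (ii)).

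The one noteworthy difference is the \emph{direction} of the farness argument. The paper fixes an optimal edit set $\triangle E$ and charges $\Omega(\eps t)$ edits to each \emph{source} component $C$ of $G_S$ that is touched, splitting on whether the deletions leave a dominant piece of $C$; you instead fix an arbitrary labeling $H'$ of $H$ and charge $\Omega(\eps t)$ to each \emph{target} component $C'$ of $H'$ whose type is missing from $S$, splitting on whether some $A_i = V(C')\cap V(C_i)$ dominates. These are dual views of the same dichotomy (dominant overlap $\Rightarrow$ pairwise $0.02$-farness; no dominant overlap $\Rightarrow$ balanced separator bound), and both yield $\Omega(\eps t)$ per component over $|\cF|/2 \cdot m$ components. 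Your version is arguably cleaner because it sidesteps the paper's somewhat awkward ``$\triangle E \cap E(C)$ connected'' case split. Two small points to tighten when you write it out: make explicit that the ``$O(t)$'' slack in your dominant case is really at most $0.02t$ (from the $\le 0.01t/d$ vertices outside $A_{i_0}$ on each side, times degree $d$), so that subtracting it from $0.02\,dt$ still leaves $\Omega(t)$; and note (as the paper does) that when $t \le 2/\eps$ the distance clause of suitability is vacuous, but then a single required edit already gives $\Omega(\eps t)$.
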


The proof of \Lem{main} requires some tools, that we shall build
up in this section. Firstly, using the properties of a suitable family, we can construct two graphs
consisting entirely of components in $\cF$ that are far from each other.
These graphs form the core of the lower bound.

\begin{claim} \label{clm:main:item2}
Let $\cF$ denote an $\eps$-suitable family of a graph class $\cC$ (as defined in \Lem{main}).
Let $I = [|\cF|]$ denote a set indexing graphs in $\cF$. 
    
Let $H_1$ be the graph consisting of a disjoint union of all graphs in $|\cF|$.
For any subset $R \subseteq I$ with $|R| = |I|/2$, let $H_R$ be the disjoint
union of two copies of each graph in $\cF$ indexed by $R$.
Then the graphs $H$ and $H_R$ are $\Omega(\eps)$ far from each other.
\end{claim}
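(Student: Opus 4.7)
Both graphs have $n = |I| t$ vertices. I fix an arbitrary bijection $\phi: V(H_1) \to V(H_R)$ and aim to show that the induced edit count is $\Omega(\eps d n)$, which will imply $\dist(H_1, H_R) = \Omega(\eps)$. Write the components of $H_R$ as $\{C_j\}_{j \in [|I|]}$, each isomorphic to some $F_{b_j}$ with $b_j \in R$, and define the blocks $V_j = \phi^{-1}(C_j) \subseteq V(H_1)$, each of size $t$. The edit count under $\phi$ decomposes as the sum of (i) the \emph{cross edges} of $H_1$ (edges whose endpoints lie in different blocks) and (ii) the block-wise symmetric differences $|E(H_1[V_j]) \triangle E(C_j)|$. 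Set $\alpha = 0.01/d$.

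\textbf{Dichotomy on components.} For each component $D_i$ of $H_1$ (a labeled copy of $F_i$), the partition $\{D_i \cap V_j\}_j$ of $V(D_i)$ is either \emph{hosted}, meaning some piece $D_i \cap V_{j_i^\star}$ has size $\ge (1-\alpha)t$, or \emph{scattered}. For scattered $D_i$, a greedy bundling of the pieces produces a bipartition of $V(D_i)$ with each side of size in $[\alpha t,(1-\alpha)t]$; by the balanced-separator clause of $\eps$-suitability, this cut must contain $\Omega(\eps t)$ edges of $D_i$, all of which are cross edges. For hosted $D_i$, the host block $V_{j_i^\star}$ contains more than half of its vertices in $D_i$, so the map $i \mapsto j_i^\star$ is an injection; let $i(j)$ denote its partial inverse.

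\textbf{Pigeonhole and block-wise cost.} Each $b \in R$ serves as the target type $b_j$ of exactly two blocks, but can equal the host type $i(j)$ of at most one block by injectivity. Hence at least $|I|/2$ blocks $j$ suffer a \emph{type mismatch}: either $V_j$ has no host, or $V_j$ is hosted but $i(j) \ne b_j$. In the mismatched hosted case, $H_1[V_j]$ lies at unlabeled normalized distance $O(\alpha)$ from $F_{i(j)}$ (restore at most $d \alpha t$ missing edges of $F_{i(j)}$ incident to the absent boundary vertices and delete at most $d \alpha t / 2$ stray edges coming from the small pieces of other components that land in $V_j$). Combining this with $\dist(F_{i(j)}, F_{b_j}) \ge 0.02$ via the triangle inequality on unlabeled distance yields $|E(H_1[V_j]) \triangle E(C_j)| = \Omega(d t)$; the choice $\alpha = 0.01/d$ is precisely what keeps $0.02 - O(\alpha)$ a positive constant.

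\textbf{Combining and main obstacle.} Let $B$ denote the number of scattered components. If $B \ge |I|/4$, then cross edges alone contribute $\Omega(B \eps t) = \Omega(\eps n)$ edits, giving distance $\Omega(\eps/d) = \Omega(\eps)$ after normalization by $dn$. Otherwise at least $|I|/2 - B > |I|/4$ type-mismatched hosted blocks each contribute $\Omega(dt)$, for a total of $\Omega(d n)$ edits and distance $\Omega(1)$. Either branch certifies $\Omega(\eps)$-farness, and since $\phi$ was arbitrary the claim follows. I expect the delicate step to be the mismatched-hosted block analysis: translating the \emph{unlabeled} distance $\dist(F_{i(j)}, F_{b_j}) \ge 0.02$ into a \emph{labeled} symmetric-difference lower bound requires carefully absorbing the $O(\alpha t)$ boundary slack through the triangle inequality, which is exactly the role played by the small parameter $\alpha = 0.01/d$ baked into the definition of $\eps$-suitability.
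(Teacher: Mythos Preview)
Your argument is essentially the paper's, viewed from the dual side: the paper fixes an optimal edit set and analyzes each component $C$ of $H_R$ according to whether the deletions leave it mostly intact (invoking the pairwise-distance clause of suitability) or fragment it in a balanced way (invoking the separator clause), whereas you fix a bijection and analyze each component $D_i$ of $H_1$ according to whether it lands mostly in one block or is scattered. The dichotomy, the two clauses of \Def{eps:suitable} you appeal to, and the pigeonhole count that at least half the components must be touched are the same; your framing via an explicit bijection and the injectivity of the host map is arguably cleaner than the paper's.

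One gap to patch: the pairwise-distance clause $\dist(F,F') \ge 0.02$ in \Def{eps:suitable} is only asserted when $t > 2/\eps$, and you invoke it unconditionally in the mismatched-hosted analysis. For $t \le 2/\eps$ (which is precisely the regime of the tree family in \Sec{suitable-tree}) your $\Omega(dt)$ per-block bound is not available. The fix is the same one-liner the paper uses in its Case~1 and Case~2(a): in a hosted-but-mismatched block, either $V_j$ contains vertices from some other component (so $H_1[V_j]$ is disconnected, hence not isomorphic to the connected $F_{b_j}$, connectivity being forced by the separator clause), or $V_j = D_{i(j)}$ exactly and $H_1[V_j] \cong F_{i(j)} \not\cong F_{b_j}$. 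Either way $|E(H_1[V_j]) \triangle E(C_j)| \ge 1 \ge \eps t/2$, which still gives $\Omega(\eps n)$ total edits over $|I|/4$ such blocks.
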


\begin{proof}
	Since $\cF$ is $\eps$-suitable, we know that all graphs $F \in \cF$ are defined on
	$t = t(\eps)$ vertices. We know for any $F, F' \in \cF$, since $|F| = |F'| = t$
	the number of edge edits needed to change $F$ to $F'$ is at least $\Omega(t)$.
    Since $R$ will be fixed for the remainder of the proof, it will be convenient to refer to $H_R$ as $H_2$
	for the rest of the argument.
	Let us denote the components of $H_1$ (resp $H_2$) as $\comps(H_1)$ (resp as $\comps(H_2)$). 
	Let $s = |\comps(\cF)|$. $H_2$ contains $s/2$ connected components in $\cF$ each of which occurs twice.
	Recall from \Def{eps:suitable}, $C,C' \in \comps(\cF)$ are $\Omega(1)$-far from each other. 
	Let $\cM_{\half}$ denote
	the set of \textit{missing} components of $\cF$ in $H_2$. And let $\cP_{\half}$ denote the
	set of components from $\cF$ that are \textit{present} in $H_2$ (without duplicates). Thus,
	$H_2$ contains two subgraphs isomorphic to $\cP_{\half}$ which we denote as $\cP^1_{\half}$
	and $\cP^2_{\half}$. For $C \in \comps(\cP_{\half})$, let $C^1 \in \comps(\cP^1_{\half})$
	denote the subgraph of $\cP^1_{\half}$ isomorphic to $C$. Similarly, define $C^2$. Let 
	$$\deltaE = \argmin_{\substack{\deltaE' \subseteq V(H_2) \times V(H_2)\\E(H_1) = E(H_2) \oplus \deltaE'}} |\deltaE'|$$ 
	denote the smallest set of edge modifications to $H_2$ that produce 
	a graph isomorphic to $H_1$. We now lower bound $|\deltaE|$. 
	For a component $C \in \cF$, let $E(C)$ denote the edge set of that component. 
	Fix a component $C$ such that $E(C) \cap \deltaE \neq \emptyset$. The following cases
	arise.

	\begin{itemize}
		\item \textbf{Case 1:} $\deltaE \cap E(C)$ is connected. Thus, the deletion edits
			do not disconnect $C$. In this case, after the insertion edits in $\deltaE$,
			we obtain a component in $\cM_{\half}$. By the definition of suitable families, if $t > 2/\eps$, it takes at least (follow the sentence). If $t \leq 2/\eps$, then $C$ obviously gets at least one edit, which is at least $\Omega(\eps t)$ edits. In both cases, the number of edits is $\Omega(\eps t)$.

		\item \textbf{Case 2:} $\deltaE \cap E(C)$ is not connected. Now, we get two subcases
			depending on size of the biggest component in $\deltaE \cap E(C)$.

			\begin{enumerate}
				\item   The biggest component in $C \setminus \deltaE$ has size at least 
					$(1 - 1/100d) \cdot t$. In this case, $\deltaE \oplus E(C)$ maps
					$C$ to some component in $\cM_{\half}$. Now we split into two cases.
                    Suppose $t > 2/\eps$. Since $C$ is a bounded degree
					graph, the number of insertion edits (by \Def{eps:suitable}) is at least
					$0.02t - (\frac{0.01}{d} \cdot d)t = 0.01t$. Suppose $t \leq 2/\eps$.
                    In any case $C$ must get at least one insertion edit to make it have size $t$.
                    So the number of edits is at least $\Omega(\eps t)$.

				\item The biggest component in $\deltaE \cap E(C)$ has size at most 
					$(1 - \frac{1}{100d}) \cdot t$. In this case, note that $\deltaE$ 
					removes a balanced separator of $C$. Thus, the number of deletion 
					edits made to $C$ is at least $\Omega(\eps t)$.
			\end{enumerate}

	\end{itemize}

	In all of the above cases we see that for any component $C \in \comps(H_2)$ with 
	$\deltaE \cap E(C) \neq \emptyset$, the number of edits made to $C$ is at least
	$\Omega(\eps t)$. Finally, note that 
	$$|\{C \in \comps(H_2) \colon E(C) \cap \deltaE \neq \emptyset\}| \geq \frac{|\comps(H_2)|}{2}.$$
	And therefore, the distance between graphs $H_1$ and $H_2$ is at least $\Omega(\eps)$ as desired.
\end{proof}

We can now define the graph $H$ used to prove the lower bound \Thm{main:planar}.
For convenience, assume that $n$ is a multiple of $t|\cF|$, the size of graphs in $\cF$.
(If not, we will pad the final graph with extra isolated vertices.) The (unlabeled)
graph $H$ is a disjoint union of $n/t|\cF|$ copies of every graph in $\cF$.

Analogously, for $R \subset I, |R| = |\cF|/2$, let $H_R$
be the graph that has $n/(2t|\cF|)$ copies of every graph in $\cF$ indexed by $R$.
For convenience, we refer to any such $R$ as a \emph{half index set}.

\subsection{The distinguishing problem} 

Note that input graph is really a labeled instance, as given by the adjacency list. 
We now give the explicit distribution of labeled inputs $G$ and the main ``distinguishing" problem.
By Yao's minimax lemma, it suffices to prove lower bounds for determinstic algorithms
over randomized inputs. So let us define the YES and NO distributions.

\begin{itemize}
    \item YES: Generate a uar labeling of $H$.
    \item NO: Generate a uar half index set $R$. Then, generate a uar labeling of $H_R$.
\end{itemize}

Given an input $G$ generated from either of these distribution, the \emph{distinguishing problem}
is to determine, with probability $> 2/3$, which distribution $G$ came from. (Note that the supports of these distributions are disjoint.)

\begin{claim} \label{clm:distinguish} If there exists a deterministic distinguisher that makes $q$ queries
(where $q$ is independent of $n$), then there exists an algorithm with the following property. Given input $G$
(generated as above), the algorithm gets $q$ uar connected components of $G$ and determines, 
with probability $> 2/3 - o(1)$, whether $G$ came from the YES or NO distribution.
\end{claim}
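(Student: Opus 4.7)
The plan is to reveal the uar labeling of $G$ lazily in response to the queries of the deterministic distinguisher $\cA$, then use a birthday-type bound to show that, with probability $1-o(1)$, the tester's view is reproducible from $q$ uniformly-random connected components of $G$. The two structural facts I will use are that every connected component of both $H$ and $H_R$ has the same constant size $t = O(\eps^{-2})$, and that the total number of components is $N = \Theta(n/t)$.

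First I would replace the uar labeling by an equivalent lazy procedure: whenever $\cA$ queries a label $v$ that has not been touched by any previous query or answer, sample a uar vertex $u$ of the underlying unlabeled graph among those not yet revealed, assign $v$ to $u$, and reveal the entire component of $u$ by assigning uar unused labels to its other $t-1$ vertices. Later queries inside this component are then answered consistently. By the standard principle of deferred decisions, this induces the same joint distribution on query/answer transcripts as drawing a uar labeling up front.

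Next comes the birthday step and the simulation. Call a query \emph{fresh} if it is the first to touch its component; there are at most $q$ fresh queries in an execution. Conditioned on any history in which $j\le q$ components have been touched, the next fresh query lands in one of those components with probability at most $jt/n \le qt/n$. A union bound shows that the event $\cE$ ``every fresh query falls into an untouched component'' has probability at least $1 - q^2 t/n = 1 - o(1)$, since $q$ and $t$ are independent of $n$. Under $\cE$ the transcript is a function of the (at most $q$) pairwise-disjoint components touched, and conditioned on the history each such component is a uar component of $G$ among those not yet visited. The further total-variation cost of replacing ``sample without replacement from $N$ components'' by ``sample $q$ components uar with replacement'' is $O(q^2/N) = O(q^2 t/n) = o(1)$. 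Hence the simulator draws $q$ uar components of $G$ up front, runs $\cA$ from scratch, and whenever $\cA$ issues a fresh query uses the next of these components (generating internal labels on the fly); on $\cE$ it agrees with $\cA$, so it distinguishes YES from NO with probability at least $2/3 - o(1)$. The main obstacle is purely book-keeping: verifying that both ``with vs.\ without replacement'' approximations cost only $o(1)$ in total variation, which they do because $q$ and $t$ are independent of $n$. No structural feature of planarity or of the family $\cF$ is used; the claim is a generic consequence of all components having constant size under a uar vertex labeling.
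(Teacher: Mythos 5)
Your proof is correct and takes essentially the same route as the paper's: model the deterministic algorithm's exploration, use a birthday-type union bound (at most $q^2 t/n = o(1)$) to argue that new-ID queries almost surely land in untouched components, and then pay an additional $O(q^2 t/n)$ total-variation cost to replace uar-without-replacement components by uar-with-replacement ones. The paper phrases the last step by comparing $1/\binom{n/t}{q}$ to $q!/(n/t)^q$ rather than quoting a TV bound, but the content is identical; your explicit coupling via deferred decisions is a slightly more careful formalization of what the paper writes informally.
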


\begin{proof} Consider any deterministic algorithm. One can express its behavior as follows.
At any stage, it has explored the adjacency lists of some vertices $v_1, v_2, \ldots, v_b$.
It can choose to further query the adjacency lists of these vertices. In this case, it further
explores the connected components already visited. The algorithm can also choose to visit a new 
vertex ID $v'$. Note that $v'$ is an arbitrary (but fixed) function of all the vertex
IDs and edges seen so far. Since the input is a uar labeling of either $H$ or $H_R$,
the probability that $v'$ lies in a component already visited is at most $qt/n$.
With probability at least $1-qt/n$, $v'$ lies in a uar connected component (since all components
have exactly the same size, $t$). By a union bound, 
with probability at least $1-q^2t/n = 1 - o(1)$, the algorithm sees a uar set of at most $q$ components
and makes its decision.

The total number of components is $n/t$, and hence the probability of seeing a specific set of $q$
components is (up to $1\pm o(1)$-factors) $1/{n/t \choose q}$. Since $n$ is sufficiently large, this probability is within $(1+o(1))$-factors
of $q!/(n/t)^q$. This implies that with probability at least $1-o(1)$, the algorithm sees a uniform distribution
of multisets of $q$ components. Alternately, the algorithm gets $q$ uar connected components of $G$.
The success probability only changes by $(1-o(1))$-factors.
\end{proof}

The proof of \Lem{main} is now a straightforward reduction from distribution testing
and an application of the tools built thus far.

\begin{proof} (of \Lem{main}) By \Clm{main:item2}, all graphs generated by the NO distribution
are $\eps$-far from the property $\{H\}$. By Yao's minimax lemma, it suffices
to show that any deterministic distinguisher requires $\Omega(\sqrt{|\cF|})$ queries.

By \Clm{distinguish}, if there is a deterministic distinguisher making $q$ queries,
then there is an algorithm that can distinguish $H$ from $H_R$ (with probability $> 2/3 - o(1)$) 
by querying $q$ uar connected components. Standard arguments in distribution 
testing show that $\Omega(\sqrt{|\cF|})$ samples are required to distinguish the uniform distribution on the index set $I = [|\cF|]$
from the uniform distribution on a uar half index set $R$ (refer to the section titled ``An $\Omega(\sqrt{n}/\eps^2)$ lower bound" on page 13 of \cite{Clemsurvey}).
By mapping $I$ arbitrarily to $\cF$, getting samples from the uniform distribution in $I$
is equivalent to getting uar components from $G$ generated from the YES distribution.
Analogously, samples from the uniform distribution on a uar half index set $R$
is equivalent to uar components from $G$ generated from the NO distribution.
Thus, any algorithm that distinguishes $H$ from $H_R$ with uar connected components
requires $\Omega(\sqrt{|\cF|})$ samples, leading to the lower bound for deterministic
distinguishers.
\end{proof}

\section{A suitable family for planar graphs}

The main result of this section is the following theorem.

\begin{theorem} \label{thm:large:family}
	There exists a family $\cF$ of planar graphs where each graph has $s^2$ vertices and
	\begin{asparaitem}
	\item $|\cF| \geq \exp(\Omega(s^2))$.
	\item For every $G \in \cF$, the size of a minimum balanced vertex separator in $G$ is $\Omega(s)$
	\item For every pair of graphs $G,G' \in \cF$, it takes at least $0.16 s^2$ edge edits to go from
		$G$ to $G'$. In particular, it holds that $G$ and $G'$ are $0.02$-far.
	\end{asparaitem}
\end{theorem}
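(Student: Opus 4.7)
The plan is to build $\cF$ by decorating the $s\times s$ grid with diagonals and then cleaning via a packing argument to obtain a large, far-apart subfamily. Let $G_s$ denote the $s\times s$ grid on $s^2$ vertices, with its $(s-1)^2$ unit square faces. For each configuration $c\in\{0,1,2\}^{(s-1)^2}$, define the graph $G_s^c$ by placing in each face either no diagonal, the NW--SE diagonal, or the NE--SW diagonal according to $c$. Since each face is altered by at most one chord, and diagonals in different faces share at most a grid vertex but never cross, every $G_s^c$ is planar, has exactly $s^2$ vertices, and maximum degree at most $8$. The total count of configurations is $3^{(s-1)^2}=\exp(\Omega(s^2))$, which is already the right order of magnitude.

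Next I apply a Gilbert--Varshamov-style greedy packing. For configurations $c, c'$, the Hamming distance $d_H(c,c')$ counts faces where they disagree, and crucially $|E(G_s^c) \triangle E(G_s^{c'})| \geq d_H(c,c')$. To also handle the fact that graphs in $\cF$ will ultimately be compared up to isomorphism, I use the symmetrized distance $\tilde d(c,c'):=\min_{\sigma} d_H(c,\sigma\cdot c')$, where $\sigma$ ranges over the $8$ dihedral automorphisms of $G_s$ acting on its face set. Fix $\delta$ slightly above $0.16$, say $\delta = 0.17$. The Hamming ball of radius $\delta(s-1)^2$ in $\{0,1,2\}^{(s-1)^2}$ has volume at most $\exp(c_\delta (s-1)^2)$ where $c_\delta = (H_2(\delta)+\delta)\ln 2 < \ln 3$, so greedy packing (accounting for a factor of $8$ for dihedral symmetries) yields $|\cF| \geq 3^{(s-1)^2}/\bigl(8\exp(c_\delta (s-1)^2)\bigr) = \exp(\Omega(s^2))$, with every pair of chosen configurations satisfying $\tilde d \geq \delta(s-1)^2$.

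For the balanced separator property, observe that each $G_s^c$ is a supergraph of $G_s$, and adding edges never shrinks the minimum vertex separator. The $s\times s$ grid is known to require any $(\alpha,1-\alpha)$-balanced vertex separator (for any constant $\alpha>0$) to have size $\Omega(s)$ by the standard vertex-isoperimetric inequality on the grid. Since $d = O(1)$, the constant $0.01/d$ is bounded away from $0$, and hence every $G_s^c$ admits no $(0.01/d,1-0.01/d)$-balanced separator of size $o(s)$, proving the separator bound.

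The main obstacle is the edit-distance property, where we must lower-bound $|E(G_s^c)\triangle E(\pi(G_s^{c'}))|$ against \emph{every} vertex bijection $\pi$, not merely the $8$ dihedral symmetries. I plan a dichotomy: (a) if $\pi$ induces a grid automorphism $\sigma$, then $|E(G_s^c)\triangle E(\pi(G_s^{c'}))| \geq d_H(c,\sigma\cdot c')\geq \tilde d(c,c')\geq \delta(s-1)^2 \geq 0.16 s^2$; (b) if $\pi$ is not a dihedral automorphism of $G_s$, then a grid-rigidity argument forces $\pi$ to destroy $\Omega(s^2)$ grid edges, each contributing directly to the symmetric difference. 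The rigidity claim will be established by a local counting argument: any vertex bijection that disagrees with every dihedral automorphism on $k$ vertices must disturb $\Omega(k)$ grid edges incident to those vertices, so either $k=o(s^2)$ (in which case we can replace $\pi$ with the best nearby dihedral automorphism at additive loss $o(s^2)$, and case (a) applies) or $k = \Omega(s^2)$ and the edit distance is already $\Omega(s^2)$ from destroyed grid edges alone. Combining both cases gives the required lower bound of $0.16 s^2$ for all sufficiently large $s$, completing the construction.
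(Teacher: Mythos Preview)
Your strategy mirrors the paper's at a high level: decorate the $s\times s$ grid with diagonals to produce $\exp(\Omega(s^2))$ configurations, then thin via a greedy packing; the separator bound and the counting go through. Where the two approaches diverge is in how they control isomorphisms. The paper first adds a handful of asymmetric edges at the four corners so that the base graph becomes $3$-connected with all four corners distinguishable, and then invokes Whitney's theorem (unique planar embedding of $3$-connected graphs) to argue that distinct diagonal configurations already yield non-isomorphic graphs; the packing is then carried out directly in configuration Hamming distance. You instead keep the bare grid and absorb its eight dihedral symmetries into the packing radius.

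The genuine gap is your case (b). The rigidity claim that a bijection disagreeing with every dihedral automorphism on $k$ vertices must destroy $\Omega(k)$ grid edges is false. Let $S_1,S_2$ be two disjoint $\ell\times\ell$ subgrids placed asymmetrically, and let $\pi$ swap them via a translation while fixing all other vertices. Then $\pi$ disagrees with every dihedral automorphism on $k=2\ell^2$ vertices, yet only the $O(\ell)=O(\sqrt{k})$ edges crossing the boundaries of $S_1,S_2$ are destroyed; interior grid edges and interior diagonals map to interior grid edges and diagonals. This lifts to a counterexample to the whole edit-distance claim: choose $c,c'$ that agree outside $S_1\cup S_2$ and satisfy $c|_{S_1}=c'|_{S_2}$ and $c|_{S_2}=c'|_{S_1}$ under the translation. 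Then $\tilde d(c,c')=2\,d_H(c|_{S_1},c|_{S_2})$, which for $\ell=\Theta(s)$ can be made to exceed your packing threshold $0.17(s-1)^2$, so both configurations can land in $\cF$; yet via $\pi$ the unlabeled edit distance between $G_s^{c}$ and $G_s^{c'}$ is only $O(s)$, not $\Omega(s^2)$. Your dichotomy therefore does not establish the third bullet. The paper's corner gadgets together with Whitney at least pin down a unique embedding and force the automorphism group of every graph in the family to be trivial, which is the right starting point for a rigidity argument; your bare-grid construction does not have this, and quotienting by only the eight dihedral symmetries cannot account for non-global ``local isometries'' such as the subgrid swap above.
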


We collect some ingredients which will be useful in proving \Thm{large:family}. 
The first important ingredient we need is Whitney's theorem.

\begin{theorem}[Whitney's Theorem. (Theorem 4.3.2 in \cite{Diestel})]
	Any two planar embeddings of a $3$-connected planar graph are equivalent. \label{thm:whitney}
\end{theorem}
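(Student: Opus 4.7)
The plan is to show that the set of face boundaries in a planar embedding of a $3$-connected planar graph $G$ is a purely combinatorial invariant of $G$, namely the family of \emph{peripheral cycles}: cycles $C \subseteq G$ that are induced (chordless) and such that $G \setminus V(C)$ is connected. Two planar embeddings are called equivalent if they induce the same rotation system (cyclic order of incident edges at each vertex) up to a global reflection. Since the peripheral-cycle characterization depends only on the abstract graph $G$, the facial cycles agree in every planar embedding, and from the facial cycles I will recover the rotation system at each vertex up to reversal.

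First I would handle the easy direction: in any planar embedding of a $3$-connected planar graph $G$, every face boundary is a peripheral cycle. That each face boundary is a simple cycle uses $2$-connectivity. Absence of chords uses $3$-connectivity: a chord would subdivide the face and yield two vertices whose removal disconnects $G$. The non-separating property follows from the topological fact that removing an open disk from the sphere leaves a connected region, so the vertices outside the facial cycle remain connected through the remaining faces.

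Next is the harder direction: every peripheral cycle $C$ of $G$ bounds a face in a given planar embedding. Viewing $C$ as a simple closed curve in the plane (or on the sphere after one-point compactification), it separates the complement into two open regions. If $C$ does not bound a face, both regions contain at least one vertex of $G \setminus V(C)$. Because the embedding is planar, no edge of $G$ crosses $C$, so every edge between these two regions would have to use a vertex of $V(C)$. Consequently $G \setminus V(C)$ has its vertices split into two nonempty classes with no edge between them, contradicting that $C$ is peripheral. This Jordan-curve-plus-$3$-connectivity argument is the heart of the proof and I expect it to be the main obstacle: making precise that $3$-connectivity (not just $2$-connectivity) is what forces the two sides of $C$ to both be nonempty requires carefully ruling out the degenerate case where one side contains only edges and no vertices, which is where $3$-connectivity is used (it implies that $C$ has length at least $3$ and bounds a genuine $2$-cell).

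Finally, I would conclude. Both planar embeddings have the same set of facial cycles, namely the peripheral cycles of $G$. At any vertex $v$, the facial cycles passing through $v$ pair up consecutive edges in the rotation at $v$; in a $3$-connected graph every such local pairing is forced, so the cyclic order of edges around $v$ is determined up to reversal. A standard connectivity argument (propagating the choice of orientation along a spanning subgraph) then upgrades the per-vertex reversal freedom to a single global reflection. Hence the two embeddings are equivalent, proving \Thm{whitney}.
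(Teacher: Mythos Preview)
The paper does not prove \Thm{whitney}; it merely cites it from Diestel's textbook and uses it as a black box. So there is no ``paper's own proof'' to compare your attempt against.

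That said, your sketch is essentially the standard argument (and indeed the one Diestel gives): characterize the facial cycles combinatorially as the peripheral cycles, then recover the rotation system from the set of facial cycles. The structure is sound. One small misattribution worth flagging: in your ``harder direction'' you say the main obstacle is invoking $3$-connectivity to rule out a side of $C$ containing only edges and no vertices. In fact that step only uses that $C$ is \emph{induced}: an edge lying strictly inside one region of $C$ with both endpoints on $C$ would be a chord, which is forbidden; hence each nonempty side of $C$ contains a vertex of $G\setminus V(C)$, and the non-separating hypothesis finishes it. The genuine use of $3$-connectivity is concentrated in the ``easy direction'' (face boundaries are induced and non-separating) and in ensuring that at each vertex the incident faces are distinct so the rotation is well defined. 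This is a minor point and does not break your argument.
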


We first define a ``base'' graph (\Def{base:graph}). This is obtained in the following manner. Let us start with the $s \times s$ 
grid which we denote as $G_0$. The label set of $G_0$ is indexed by a pair 
$(i,j) \in [s] \times [s]$. We call this labeling the \emph{standard grid order}. We denote 
the base graph by $G$, and $G$ is constructed on top of $G_0$ by the addition of some specific edges. Essentially, we add edges at the corners; we refer the reader to \Fig{grid}. We add edges from $(0,0)$ to $(0,2),(2,0)$, from $(0,s)$ to $(0,s-3),(3,s)$, from $(s,s)$ to $(s,s-4),(s-4,s)$ and from $(s,0)$ to $(s,5),(s-5,0)$. This graph $G$ thus differs from  $G_0$
only in the edges adjacent to the corner vertices, making all the corners \emph{unique}. Let us label $G$ according to the standard
grid order. For $u \in V$, we thus identify $u$ as the pair $(x_u, y_u)$ which refers to the $(x,y)$ 
coordinate of $u$ on the grid. With this labeling, we have $E(G) = E(G_0) \cup E_1$.
where the set $E_1$ contains all the non grid neighbors of all the corner vertices.

\begin{definition} \label{def:base:graph}
	Consider the graph $G$ obtained above and consider the planar embedding of $G$ (which is unique by
	\Thm{whitney}). We will fix an embedding of $G$ where 

	\begin{asparaitem}
	\item The unique corner vertex adjacent to two vertices 2 hops away (according to standard grid order) 
		is located at $(0,0)$.
	\item The unique corner vertex adjacent to two vertices 3 hops away (in the standard grid order) 
		is located at $(0,s)$.
	\item The unique corner vertex adjacent to two vertices 4 hops away (in the standard grid order) 
		is located at $(s,s)$. 	
	\item The unique corner vertex adjacent to two vertices 5 hops away (in the standard grid order) 
	is located at $(s,0)$.
	\end{asparaitem}
	We call this graph (and by abuse of terminology, its embedding) as the base graph.
\end{definition}

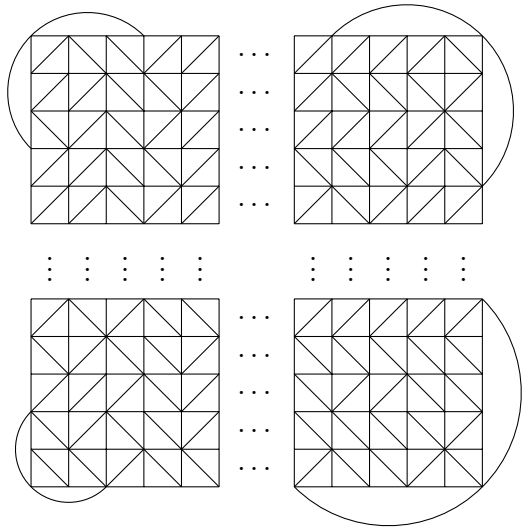
\begin{figure}[h]
	\caption{An example of one of the graphs from our family $\mathcal B$.}
	\centering 
	\begin{tikzpicture}[every node/.style={minimum size=.5cm-\pgflinewidth, outer sep=0pt}]
	\draw[step=0.5cm,color=black] (0,0) grid (2.5,2.5);
	\node at (3,1.75) {\ldots};
	\node at (3,1.25) {\ldots};
	\node at (3,0.75) {\ldots};
	\node at (3,0.25) {\ldots};
	\node at (3,2.25) {\ldots};
	\draw[step=0.5cm,color=black] (3.4999,0) grid (6,2.5);
	\node at (0.25,3) {\vdots};
	\node at (0.75,3) {\vdots};
	\node at (1.25,3) {\vdots};
	\node at (1.75,3) {\vdots};
	\node at (2.25,3) {\vdots};
	\draw[step=0.5cm,color=black] (0,3.499) grid (2.5,6);
	\node at (3.75,3) {\vdots};
	\node at (4.75,3) {\vdots};
	\node at (4.25,3) {\vdots};
	\node at (5.25,3) {\vdots};
	\node at (5.75,3) {\vdots};
	\draw[step=0.5cm,color=black] (3.499,3.499) grid (6,6);
	\node at (3,3.75) {\ldots};
	\node at (3,4.25) {\ldots};
	\node at (3,4.75) {\ldots};
	\node at (3,5.25) {\ldots};
	\node at (3,5.75) {\ldots};
	\coordinate (BL) at (0,0);
	\coordinate (BLR) at (1,0);
	\coordinate (BLU) at (0,1);
	\coordinate (TL) at (0,6);
	\coordinate (TLR) at (1.5,6);
	\coordinate (TLD) at (0,4.5);
	\coordinate (TR) at (6,6);
	\coordinate (TRL) at (4,6);
	\coordinate (TRD) at (6,4);
	\coordinate (BR) at (6,0);
	\coordinate (BRL) at (3.5,0);
	\coordinate (BRU) at (6,2.5);
	\draw (TL) to[out=45, in=135] (TLR);
	\draw (TL) to[out=225, in=135] (TLD);
	\draw (BL) to[out=315, in=225] (BLR);
	\draw (BL) to[out=135, in=225] (BLU);
	\draw (BR) to[out=225, in=315] (BRL);
	\draw (BR) to[out=45, in=315] (BRU);
	\draw (TR) to[out=135, in=45] (TRL);
	\draw (TR) to[out=315, in=45] (TRD);

	\draw ( 0.5 , 0.0 ) -- ( 0.0 , 0.5 );
	\draw ( 0.5 , 0.5 ) -- ( 0.0 , 1.0 );
	\draw ( 0.0 , 1.0 ) -- ( 0.5 , 1.5 );
	\draw ( 0.5 , 1.5 ) -- ( 0.0 , 2.0 );
	\draw ( 0.0 , 2.0 ) -- ( 0.5 , 2.5 );
	\draw ( 0.0 , 3.5 ) -- ( 0.5 , 4.0 );
	\draw ( 0.0 , 4.0 ) -- ( 0.5 , 4.5 );
	\draw ( 0.5 , 4.5 ) -- ( 0.0 , 5.0 );
	\draw ( 0.0 , 5.0 ) -- ( 0.5 , 5.5 );
	\draw ( 0.0 , 5.5 ) -- ( 0.5 , 6.0 );
	\draw ( 1.0 , 0.0 ) -- ( 0.5 , 0.5 );
	\draw ( 0.5 , 0.5 ) -- ( 1.0 , 1.0 );
	\draw ( 0.5 , 1.0 ) -- ( 1.0 , 1.5 );
	\draw ( 0.5 , 1.5 ) -- ( 1.0 , 2.0 );
	\draw ( 1.0 , 2.0 ) -- ( 0.5 , 2.5 );
	\draw ( 0.5 , 3.5 ) -- ( 1.0 , 4.0 );
	\draw ( 0.5 , 4.0 ) -- ( 1.0 , 4.5 );
	\draw ( 0.5 , 4.5 ) -- ( 1.0 , 5.0 );
	\draw ( 0.5 , 5.0 ) -- ( 1.0 , 5.5 );
	\draw ( 1.0 , 5.5 ) -- ( 0.5 , 6.0 );
	\draw ( 1.0 , 0.0 ) -- ( 1.5 , 0.5 );
	\draw ( 1.5 , 0.5 ) -- ( 1.0 , 1.0 );
	\draw ( 1.0 , 1.0 ) -- ( 1.5 , 1.5 );
	\draw ( 1.5 , 1.5 ) -- ( 1.0 , 2.0 );
	\draw ( 1.0 , 2.0 ) -- ( 1.5 , 2.5 );
	\draw ( 1.5 , 3.5 ) -- ( 1.0 , 4.0 );
	\draw ( 1.5 , 4.0 ) -- ( 1.0 , 4.5 );
	\draw ( 1.5 , 4.5 ) -- ( 1.0 , 5.0 );
	\draw ( 1.5 , 5.0 ) -- ( 1.0 , 5.5 );
	\draw ( 1.5 , 5.5 ) -- ( 1.0 , 6.0 );
	\draw ( 2.0 , 0.0 ) -- ( 1.5 , 0.5 );
	\draw ( 2.0 , 0.5 ) -- ( 1.5 , 1.0 );
	\draw ( 2.0 , 1.0 ) -- ( 1.5 , 1.5 );
	\draw ( 2.0 , 1.5 ) -- ( 1.5 , 2.0 );
	\draw ( 2.0 , 2.0 ) -- ( 1.5 , 2.5 );
	\draw ( 1.5 , 3.5 ) -- ( 2.0 , 4.0 );
	\draw ( 2.0 , 4.0 ) -- ( 1.5 , 4.5 );
	\draw ( 1.5 , 4.5 ) -- ( 2.0 , 5.0 );
	\draw ( 2.0 , 5.0 ) -- ( 1.5 , 5.5 );
	\draw ( 1.5 , 5.5 ) -- ( 2.0 , 6.0 );
	\draw ( 2.5 , 0.0 ) -- ( 2.0 , 0.5 );
	\draw ( 2.0 , 0.5 ) -- ( 2.5 , 1.0 );
	\draw ( 2.0 , 1.0 ) -- ( 2.5 , 1.5 );
	\draw ( 2.0 , 1.5 ) -- ( 2.5 , 2.0 );
	\draw ( 2.5 , 2.0 ) -- ( 2.0 , 2.5 );
	\draw ( 2.0 , 3.5 ) -- ( 2.5 , 4.0 );
	\draw ( 2.0 , 4.0 ) -- ( 2.5 , 4.5 );
	\draw ( 2.0 , 4.5 ) -- ( 2.5 , 5.0 );
	\draw ( 2.0 , 5.0 ) -- ( 2.5 , 5.5 );
	\draw ( 2.0 , 5.5 ) -- ( 2.5 , 6.0 );
	\draw ( 3.5 , 0.0 ) -- ( 4.0 , 0.5 );
	\draw ( 4.0 , 0.5 ) -- ( 3.5 , 1.0 );
	\draw ( 4.0 , 1.0 ) -- ( 3.5 , 1.5 );
	\draw ( 4.0 , 1.5 ) -- ( 3.5 , 2.0 );
	\draw ( 3.5 , 2.0 ) -- ( 4.0 , 2.5 );
	\draw ( 3.5 , 3.5 ) -- ( 4.0 , 4.0 );
	\draw ( 4.0 , 4.0 ) -- ( 3.5 , 4.5 );
	\draw ( 3.5 , 4.5 ) -- ( 4.0 , 5.0 );
	\draw ( 3.5 , 5.0 ) -- ( 4.0 , 5.5 );
	\draw ( 3.5 , 5.5 ) -- ( 4.0 , 6.0 );
	\draw ( 4.5 , 0.0 ) -- ( 4.0 , 0.5 );
	\draw ( 4.5 , 0.5 ) -- ( 4.0 , 1.0 );
	\draw ( 4.0 , 1.0 ) -- ( 4.5 , 1.5 );
	\draw ( 4.5 , 1.5 ) -- ( 4.0 , 2.0 );
	\draw ( 4.0 , 2.0 ) -- ( 4.5 , 2.5 );
	\draw ( 4.5 , 3.5 ) -- ( 4.0 , 4.0 );
	\draw ( 4.5 , 4.0 ) -- ( 4.0 , 4.5 );
	\draw ( 4.0 , 4.5 ) -- ( 4.5 , 5.0 );
	\draw ( 4.5 , 5.0 ) -- ( 4.0 , 5.5 );
	\draw ( 4.0 , 5.5 ) -- ( 4.5 , 6.0 );
	\draw ( 5.0 , 0.0 ) -- ( 4.5 , 0.5 );
	\draw ( 5.0 , 0.5 ) -- ( 4.5 , 1.0 );
	\draw ( 4.5 , 1.0 ) -- ( 5.0 , 1.5 );
	\draw ( 5.0 , 1.5 ) -- ( 4.5 , 2.0 );
	\draw ( 4.5 , 2.0 ) -- ( 5.0 , 2.5 );
	\draw ( 4.5 , 3.5 ) -- ( 5.0 , 4.0 );
	\draw ( 5.0 , 4.0 ) -- ( 4.5 , 4.5 );
	\draw ( 4.5 , 4.5 ) -- ( 5.0 , 5.0 );
	\draw ( 5.0 , 5.0 ) -- ( 4.5 , 5.5 );
	\draw ( 4.5 , 5.5 ) -- ( 5.0 , 6.0 );
	\draw ( 5.0 , 0.0 ) -- ( 5.5 , 0.5 );
	\draw ( 5.5 , 0.5 ) -- ( 5.0 , 1.0 );
	\draw ( 5.5 , 1.0 ) -- ( 5.0 , 1.5 );
	\draw ( 5.5 , 1.5 ) -- ( 5.0 , 2.0 );
	\draw ( 5.0 , 2.0 ) -- ( 5.5 , 2.5 );
	\draw ( 5.0 , 3.5 ) -- ( 5.5 , 4.0 );
	\draw ( 5.0 , 4.0 ) -- ( 5.5 , 4.5 );
	\draw ( 5.0 , 4.5 ) -- ( 5.5 , 5.0 );
	\draw ( 5.5 , 5.0 ) -- ( 5.0 , 5.5 );
	\draw ( 5.0 , 5.5 ) -- ( 5.5 , 6.0 );
	\draw ( 6.0 , 0.0 ) -- ( 5.5 , 0.5 );
	\draw ( 5.5 , 0.5 ) -- ( 6.0 , 1.0 );
	\draw ( 6.0 , 1.0 ) -- ( 5.5 , 1.5 );
	\draw ( 6.0 , 1.5 ) -- ( 5.5 , 2.0 );
	\draw ( 5.5 , 2.0 ) -- ( 6.0 , 2.5 );
	\draw ( 5.5 , 3.5 ) -- ( 6.0 , 4.0 );
	\draw ( 6.0 , 4.0 ) -- ( 5.5 , 4.5 );
	\draw ( 6.0 , 4.5 ) -- ( 5.5 , 5.0 );
	\draw ( 5.5 , 5.0 ) -- ( 6.0 , 5.5 );
	\draw ( 6.0 , 5.5 ) -- ( 5.5 , 6.0 );
	\end{tikzpicture}\label{fig:grid}
\end{figure}

\subsection{Proof of \Thm{large:family}}

Let us begin by defining the following family of graphs. We will show that all graphs in this family
are pairwise non isomorphic.

\begin{construction} \label{con:large:family}
	The family $\cB$ of graphs is a family of graphs labeled according to \emph{standard grid order} 
	which is obtained from the base graph $G$ in the following way. Let 
	$$\cI =  [0, s) \times (1, s]$$ 
	denote an index set for vertices in $V(G)$.
	Graphs in this family are obtained in the following manner. For each $(i,j) \in \cI$, we add
	exactly one of the following edges. Either we add
	\begin{itemize}
		\item The upward diagonal $\left((i,j), (i+1, j+1) \right)$, or
		\item The downward diagonal $\left( (i+1, j), (i+1, j-1) \right)$.
	\end{itemize}

	Noting $|\cI| = s^2$, note that the size of this family is $|\cB| = 2^{s^2}$.
\end{construction}

\begin{lemma}
	The family $\cB$ defined in \Con{large:family} is a collection of $2^{s^2}$ pairwise 
	non isomorphic graphs on $s^2$ vertices. 
\end{lemma}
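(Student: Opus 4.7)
The plan is to show that any isomorphism between two graphs in $\cB$ must coincide with the identity under the standard grid labeling; hence distinct elements of $\cB$, which differ in at least one square's diagonal choice, must be non-isomorphic. First, I would verify that every $G \in \cB$ is $3$-connected: the $s \times s$ grid with every unit square triangulated by a diagonal is $3$-connected, and the corner gadgets only add more edges. By Whitney's theorem (\Thm{whitney}), each $G \in \cB$ has an essentially unique planar embedding, whose outer face is the (modified) boundary cycle of the grid and whose internal faces are all triangles.

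Next, I would use the four corner gadgets to rigidly identify the grid coordinates from the abstract graph. In any $G \in \cB$, the original grid corners are precisely the four vertices incident to two ``long'' edges from $E_1$; moreover, these four corners are pairwise distinguishable, because the boundary-cycle distance between the two long-edge endpoints attached to a corner is $2, 3, 4$, or $5$ respectively, depending on which corner it is (by \Def{base:graph}). Thus any isomorphism $\phi \colon G_1 \to G_2$ with $G_1, G_2 \in \cB$ must send the $k$-hop corner of $G_1$ to the $k$-hop corner of $G_2$ for each $k \in \{2,3,4,5\}$. Because these four labels are pairwise distinct, the cyclic order of the corners on the outer face is preserved, so both rotations and reflections of the outer cycle are ruled out, and $\phi$ must map each boundary vertex of $G_1$ to the boundary vertex of $G_2$ at the same grid coordinate.

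Finally, I would propagate the correspondence inward. Once the outer cycle is mapped identically, the vertex at $(1,1)$ is characterized as the unique interior common grid-neighbor of $(0,1)$ and $(1,0)$ which bounds a triangular face with them in the essentially unique embedding; inducting row-by-row then forces $\phi(i,j) = (i,j)$ for every grid vertex. Under this identity correspondence, the diagonal edge chosen in each unit square by $G_1$ must coincide with that chosen by $G_2$, so $G_1 = G_2$ as labeled graphs, a contradiction if the two configurations were different. The main obstacle I anticipate is the rigidity step: carefully combining Whitney's theorem with the asymmetric corner signatures to eliminate all nontrivial symmetries of the grid, which is exactly what the four distinct hop-distances $2, 3, 4, 5$ were designed to accomplish.
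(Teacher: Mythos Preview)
Your proposal is correct and follows essentially the same approach as the paper: both arguments rely on $3$-connectedness plus Whitney's theorem to get a unique planar embedding, and then use the asymmetric corner gadgets (the $2,3,4,5$-hop signatures of \Def{base:graph}) to rule out nontrivial symmetries, forcing any isomorphism to respect the standard grid coordinates and hence the per-square diagonal choices. The paper's own proof is a three-sentence sketch of exactly this idea; your version simply fills in the rigidity step (pinning the corners, then propagating inward) that the paper leaves implicit.
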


\begin{proof}
	As mentioned earlier in \Def{base:graph}, the base graph $G$ admits
	a unique planar drawing. This is due to \Thm{whitney}, because the additional edges made our graph $3$-connected. We fix that embedding. Now consider two 
	drawings $B_1, B_2 \in \cB$. They differ in at least one diagonal and 
	therefore they are non isomorphic.
\end{proof}

However, we still need to show that this family has a sufficient number of graphs that are far 
from being isomorphic to each other. We state this in the following lemma.

\begin{lemma}
	There is a greedy procedure that takes as input the family $\cB$ of graphs and returns 
	a family $\cF$ of size at least $2^{s^2/5}$ all of which are pairwise $0.01$-far
	from each other. \label{lem:prop:lf}
\end{lemma}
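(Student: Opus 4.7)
The proof is a standard greedy packing argument. The plan is to bound the number of graphs in $\cB$ within iso-distance $0.01$ of any fixed $B \in \cB$ by some $N \le 2^{4s^2/5}$, then iteratively pick a graph into $\cF$ and remove its $0.01$-neighborhood, yielding $|\cF| \ge |\cB|/N \ge 2^{s^2/5}$.

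The first step is to establish \emph{rigidity}, which lets us reduce iso-distance to labeled edit distance. Each $B \in \cB$ contains the base graph $G$ as a spanning subgraph and therefore inherits 3-connectedness from $G$ (the same fact used in the previous lemma, coming from the corner chords of \Def{base:graph}). By Whitney's theorem (\Thm{whitney}), each $B$ has a unique planar embedding. The four corners are uniquely identifiable from the graph alone, since they are the only vertices carrying chords of lengths $2, 3, 4, 5$ and these chord-lengths are all distinct; hence any graph isomorphism between two members of $\cB$ must match corner to corresponding corner, and because the grid admits no nontrivial symmetry once the four corners are pinned, any such isomorphism is forced to be the identity. For the ball-counting we actually need an approximate version of this rigidity: any bijection $\pi$ with $\dist(B, \pi(B')) \le \eps d s^2$ must agree with the identity on all but an $O(\eps)$-fraction of vertices, since each displaced vertex $v$ of $\pi$ disturbs $\Theta(d)$ edges of the grid-backbone whose mismatches cannot be cheaply amortized against the diagonal differences. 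Granting this, $\dist_{\mathrm{iso}}(B, B')$ is within a constant factor of the labeled edit distance $\dist_{\mathrm{lab}}(B, B')$.

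It then remains to count the labeled ball, which is routine. Two graphs $B, B' \in \cB$ differ exactly in the binary diagonal choice they assign to each slot of $\cI$ (with $|\cI| = s^2$), and each differing slot contributes exactly two labeled edits (one deletion and one insertion). So a labeled-$\delta$-close neighbor differs from $B$ on at most $\delta d s^2 / 2 \le \alpha s^2$ slots, where $\alpha$ is a small constant (using $d = O(1)$, as the maximum degree of any vertex in $B$ is at most $8$). The number of such graphs is bounded by
$$\sum_{k \le \alpha s^2} \binom{s^2}{k} \;\le\; 2^{h(\alpha) s^2},$$
where $h$ is the binary entropy function. Choosing $\alpha$ small enough forces $h(\alpha) \le 4/5$, so $N \le 2^{4s^2/5}$. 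The greedy then runs for at least $2^{s^2}/2^{4s^2/5} = 2^{s^2/5}$ iterations, and by construction every pair of graphs in $\cF$ is $0.01$-far.

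The main obstacle is the approximate rigidity step in the second paragraph. Exact rigidity falls out cleanly from Whitney plus the asymmetric corner markers, but lifting it to near-isomorphisms requires a careful local accounting showing that no nontrivial permutation of vertices can ``cheaply'' rearrange the grid-with-diagonals structure: for every bijection $\pi$, the edges of $B$ that can simultaneously agree with $\pi(B')$ are constrained in terms of $|\{v : \pi(v) \neq v\}|$, so any $\pi$ far from the identity incurs $\Omega(d)$ extra edge discrepancies per displaced vertex, dominating any ``savings'' $\pi$ might offer. Modulo this step, the subsequent entropy bound and greedy loop are entirely routine.
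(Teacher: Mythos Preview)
Your greedy-packing framework is exactly what the paper does. The paper sets $b = 0.16\,s^2$, bounds the ball of graphs within $b$ diagonal-flips of a given $G$ by $\binom{s^2}{b} \le (es^2/b)^b \le 2^{0.8 s^2}$, greedily scoops out $|\cF| \ge 2^{s^2}/2^{0.8 s^2} = 2^{s^2/5}$ survivors, and then asserts that survivors ``disagree in at least $0.16\,s^2$ edges'' and are therefore $0.02$-far (using $d \le 8$).

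The paper does \emph{not} carry out the approximate-rigidity step you flag as ``the main obstacle.'' It silently identifies labeled edit distance (diagonal-flip count) with isomorphism distance, apparently leaning on the preceding lemma --- Whitney's theorem plus the asymmetric corner markers forcing a unique embedding --- but without arguing why that exact rigidity survives under near-isomorphism. Your concern is legitimate and your sketch already goes further than the paper's written proof; if your aim is only to reproduce what the paper does, the ``entirely routine'' ball-count and greedy loop are all it contains.
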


\begin{proof}
	We begin with the following observations. Consider the 
	ball $B_b(G)$ for any instance of the modified grid $G$, which is the set of graphs reachable 
	from $G$ by $b$ (diagonal) edge deletions. For any $G \in \cB$, 
	note that the number of edges in $G$ is at most $s^2$. Thus, the number of possible graphs reachable 
	from $G$ after making $b$ deletions is at most ${s^2 \choose b}$. We replace these deleted diagonals
	with diagonals oriented the other way. Choose $b = 0.16 s^2$. We apply the bound
    ${a \choose b} \leq (ea/b)^b$.
	\begin{align*}
		\max_{G \in \cB} |B_b(G)| &\leq {s^2\choose b} \leq \left(\frac{es^2}{b}\right)^b \leq \left(\frac{es^2}{0.16 s^2}\right)^{0.16s^2} \leq 2^{0.8 s^2}
	\end{align*}

	Let $\cF$ denote the set found by the algorithm below. Using the above bound on $\max_{G \in \cB} |B_b(G)|$,
	it holds that 
	$$|\cF| \geq \frac{2^{s^2}}{\max_{G \in \cB} \cB_{0.16 s^2}(G)} \geq \frac{2^{s^2}}{2^{0.8s^2}} \geq 2^{s^2/5}.$$
	Since for any $G \in \cB$, the maximum degree of any vertex in $G$ is at most $8$,
	and all the graphs in the set $\cF$ disagree in at least $0.16 s^2$ edges, it follows that all the graphs
	in $\cF$ are pairwise $0.02$-far. Below, we present our greedy procedure.

\hspace{1cm}

%

\noindent
\fbox{
	\begin{minipage}{0.9\textwidth}
		{TakeScoops($\cB$)}
		\smallskip
		\begin{compactenum}
			\item Initialize $\cF =\emptyset$.	
			\item \textbf{While} {$\cB \neq \emptyset$}:
			\begin{compactenum}
				\item $\cF =\emptyset$.
				\item $\cB = \cB \setminus B_{0.16s^2}(G)$.
				\item $\cF = \cF \cup G$.
			\end{compactenum}
			\textbf{end while}
			\item \textbf{return} $\cF$.
		\end{compactenum}
\end{minipage}} \\

\hspace{1cm}
	The family $\cF$ is thus the family output with the desired properties as stated in 
	\Thm{large:family}.

\end{proof}

The proof of the theorem follows immediately from the lemmas.
\begin{proof}[Proof of \Thm{large:family}]
	We show that the family $\cF$ obtained in \Lem{prop:lf}
	satisfies all the criteria we require to be satisfied in \Thm{large:family}. We pick these one by one:
	\begin{itemize}
		\item $\underline{ |\cF|\geq\exp(\Omega(s^2)) :}$ This is a direct consequence of \Lem{prop:lf}.
		\item $\underline{ \textrm{For every } G \in \cF, \textrm{ the size of the minimum balanced separator is } 
			\Omega(s): }$ This follows from the fact that the family $\cF$ comprises graphs that 
			contain the grid on $s\times s$ vertices as proper induced subgraphs. The grid has 
			balanced separators of size $\Omega(s)$, so all graphs in $\cF$ must have separators 
			at least as large. 
		\item $\underline{\textrm{Graphs in }\cF \textrm{ are all pairwise } 0.01 \textrm{ far from each other:}}$ 
			Follows from \Lem{prop:lf}. 
	\end{itemize}
	This shows that $\mathcal{F}$ follows the properties described in theorem \ref{thm:large:family}, thus completing its proof.
\end{proof}

\subsubsection{Proof of \Thm{main:planar}} \label{sec:proof-lb-main}

The proof follows from plugging in the right value for $s$. In \Thm{large:family}, we set $s = \eps^{-1}$.
We get a suitable family $\cF$ with $t = \eps^{-2}$, where $|\cF| = \exp(\Omega(\eps^{-2}))$.
By \Lem{main}, we have a explicit singleton property that requires $\Omega(\sqrt{|\cF|}) = \exp(\Omega(\eps^{-2}))$
queries to test.

\section{A suitable family of trees} \label{sec:suitable-tree}

This construction is fairly straightforward: we simply pick all distinct
trees on $\eps^{-1}$ vertices. We need to take care to count the number
of \emph{unrooted} trees, so some work is required.

Let $\unlabr(s)$ denote the set of unlabled and rooted binary trees on $s$ vertices. 
We first recall the following fact.

\begin{fact} \label{fact:tree:cnt}
	The number of different unlabeled rooted binary trees on $s$ vertices is 
	$|\unlabr(s)| \geq 2^{\Omega(s)}$ (Probelm 1.44 in \cite{FSBook} where they 
	attribute this result to \cite{Otter}).
\end{fact}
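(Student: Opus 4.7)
The plan is to exhibit an explicit family of $2^{\Omega(s)}$ pairwise non-isomorphic unlabeled rooted binary trees on $s$ vertices via a ``decorated caterpillar'' construction, which avoids invoking the full Otter--Flajolet--Sedgewick asymptotic machinery. Fix $L = \Theta(s)$ and form a spine rooted at $v_0$: a path $v_0, v_1, \ldots, v_L$ in which $v_{i+1}$ is the child of $v_i$. At each spine vertex $v_i$ for $i = 0, \ldots, L-1$, attach one extra subtree $M_i$ drawn from two small distinguishable gadgets -- say gadget $A$ = a single leaf, and gadget $B$ = two leaves hanging from a common vertex (size $3$). Any binary string $b \in \{0,1\}^L$ yields a rooted binary tree $T_b$ by setting $M_i = A$ when $b_i = 0$ and $M_i = B$ otherwise. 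The total vertex count is $(L+1) + L + 2|b|_1 = 2L + 1 + 2|b|_1$. Choosing $L \approx s/3$ and the Hamming weight $w = |b|_1 \approx s/6$ that hits exactly $s$ vertices produces $\binom{L}{w} = 2^{\Omega(L)} = 2^{\Omega(s)}$ candidate trees.

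Next I would argue that distinct strings $b$ yield non-isomorphic rooted trees. The subtree of $T_b$ rooted at $v_i$ contains the entire tail of the spine and its attached gadgets, so its size is $\Theta(L - i)$, while any gadget $M_i$ has size at most $3$. For $i \leq L - c$ with $c$ a small constant, the subtree hanging off $v_i$ through $v_{i+1}$ is therefore strictly larger than $M_i$. Any rooted-tree isomorphism $\phi : T_b \to T_{b'}$ must fix the root, and because the two children-subtrees of $v_0$ have distinct sizes, $\phi$ must send $v_1 \mapsto v_1'$ and $M_0 \mapsto M_0'$ individually, forcing $b_0 = b'_0$. An induction down the spine gives $b = b'$; boundary effects near $v_L$ are absorbed by reserving a short ``terminator'' at the bottom of the spine that is identical across all $T_b$.

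The only real obstacles are notational bookkeeping -- verifying the isomorphism rigidity cleanly and hitting exactly $s$ vertices -- both of which are straightforward once the construction is in place. A slicker alternative, should one wish to avoid the explicit construction, is to combine the Catalan count $C_{n-1} = \Theta(4^n / n^{3/2})$ for ordered (plane) full binary trees on $n$ leaves with the observation that each unordered isomorphism class corresponds to at most $2^{n-1}$ ordered representatives (by independently swapping the two children at each internal node). This immediately yields $|\unlabr(2n-1)| \geq \Omega(4^n / (n^{3/2} \cdot 2^{n-1})) = 2^{\Omega(n)}$, which is $2^{\Omega(s)}$ when $s = 2n - 1$. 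Either route gives the claimed bound.
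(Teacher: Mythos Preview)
The paper does not prove this statement at all: it records it as a \texttt{Fact} and simply cites Problem~1.44 of Flajolet--Sedgewick (who in turn cite Otter). So there is no ``paper's own proof'' to compare against; you are supplying strictly more than the paper does.

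Both of your routes are correct. The decorated-caterpillar construction is valid: each spine vertex has exactly two children (the next spine vertex and the root of its gadget), the gadgets themselves are binary, and the size-disparity argument forces any rooted isomorphism to carry spine to spine and hence read off $b$ bit by bit. The only genuine bookkeeping is hitting every $s$ (your count $2L+1+2w$ is always odd, so for even $s$ you need to tack on one extra fixed leaf, e.g.\ as a child of $v_L$); this is the ``straightforward'' step you already flagged. The Catalan alternative is also sound: ordered full binary trees on $n$ leaves number $C_{n-1}=\Theta(4^n/n^{3/2})$, each unordered isomorphism class accounts for at most $2^{n-1}$ of them (one swap per internal node), and the quotient is $2^{\Omega(n)}=2^{\Omega(s)}$ for $s=2n-1$, with a one-vertex pad handling even $s$.

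If you want to match the paper's level of detail, a one-line citation suffices. If you want a self-contained argument, the Catalan route is the shorter of your two and requires no case analysis near the bottom of the spine.
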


We would like to lower bound the number of unlabeled, unrooted and binary trees on $s$ vertices. To this 
end, we first make the following definition.

\begin{definition} \label{def:orbit}
	Write $\unlabr(s) = \{(T,r) \colon T \text{ is unlableled with root } r\}$. 
	For $(T,r) \in \unlabr(n)$, let 
	$$Orbit(T) = \{(T',r') \in \unlabr(n): T' \cong T \}.$$
	Note that this also means $T \in Orbit(T')$. Also, note that if $(T'',r'') \not\in Orbit(T)$
	then $Orbit(T'') \cap Orbit(T) = \emptyset$.
\end{definition}

Now, we note the following.

\begin{claim}
	For any $(T,r) \in \unlabr(s)$, $|Orbit(T)| \leq s$.
\end{claim}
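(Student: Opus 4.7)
The plan is to unpack the definition and exhibit a surjection from the vertex set of $T$ onto $Orbit(T)$. Elements of $\unlabr(s)$ are isomorphism classes of rooted binary trees, and $Orbit(T)$ collects those classes whose underlying unrooted tree is isomorphic (as an unrooted tree) to the underlying unrooted tree of $T$. So the content of the claim is: an unrooted tree on $s$ vertices has at most $s$ distinct ``rootings'' up to rooted isomorphism.

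To make this precise, I would proceed as follows. Let $U$ denote the underlying unrooted tree of $T$ (so $|V(U)| = s$). Define a map $\Phi \colon V(U) \to \unlabr(s)$ by sending a vertex $v \in V(U)$ to the rooted isomorphism class of the pair $(U, v)$. I claim that $\mathrm{Image}(\Phi) = Orbit(T)$. The inclusion $\mathrm{Image}(\Phi) \subseteq Orbit(T)$ is immediate since forgetting the root of $(U,v)$ gives $U$, which is isomorphic as an unrooted tree to $T$'s underlying tree. For the reverse inclusion, take any $(T',r') \in Orbit(T)$; by definition there is an unrooted isomorphism $\phi \colon T' \to U$, and then $(T', r') \cong (U, \phi(r'))$ as rooted trees, so $(T', r') = \Phi(\phi(r'))$.

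Having established that $\Phi$ is surjective onto $Orbit(T)$, we immediately conclude $|Orbit(T)| \leq |V(U)| = s$, as desired. The map $\Phi$ need not be injective: two vertices $v \neq v'$ of $U$ can give rise to isomorphic rooted trees exactly when there is a nontrivial automorphism of $U$ sending $v$ to $v'$, but this only tightens the bound further.

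The only obstacle is pedantry around what ``$T' \cong T$'' means in \Def{orbit}; once one reads it as unrooted isomorphism (which is the only reading that makes the definition meaningful, since rooted isomorphism is already the equivalence defining $\unlabr(s)$), the argument is a two-line reduction to counting vertices.
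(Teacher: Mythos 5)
Your proof is correct and takes essentially the same approach as the paper: both arguments reduce to the observation that an unrooted tree on $s$ vertices admits at most $s$ inequivalent rootings. The paper phrases this as a terse pigeonhole/contradiction argument (if $|Orbit(T)| > s$, two classes would have to "share a root" once transferred to $T$, forcing them to coincide in $\unlabr(s)$), whereas you make the map explicit as a surjection $\Phi\colon V(U)\to Orbit(T)$ and verify both inclusions. Your write-up is the cleaner of the two: the paper silently elides the step of transporting the root $r'\in V(T')$ to a vertex of $T$ via an unrooted isomorphism, which is exactly what your $\phi(r')$ does explicitly.
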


\begin{proof}
	Suppose not. Then $\exists (T', x), (T'', x) \in Orbit(T)$ which share $x$ as the root
	where both $T \cong T'$ and $T \cong T''$. Then $T' \cong T''$ as well. However,
	we also have $(T', x), (T'',x) \in \unlabr(n)$. Contradiction.
\end{proof}

Now, we will produce a family of unlabeled and unrooted binary trees. This is done by picking
a single tree from each orbit in $\unlabr(s)$. The following is immediate. 

\begin{claim} \label{clm:unlabu:size}
	$|\unlabu(s)| \geq 2^{\Omega(s)}$.
\end{claim}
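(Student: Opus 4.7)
The plan is to combine \Fact{tree:cnt} with the orbit-size bound from the preceding claim via a standard double-counting argument. Since orbits of distinct representatives are disjoint (as noted in \Def{orbit}), the collection of orbits partitions $\unlabr(s)$. The set $\unlabu(s)$ is obtained by picking exactly one representative per orbit, so $|\unlabu(s)|$ equals the number of orbits.

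First I would write
\[
|\unlabr(s)| \;=\; \sum_{[T] \in \unlabu(s)} |Orbit(T)|,
\]
where the sum ranges over one representative from each orbit. Applying the upper bound $|Orbit(T)| \leq s$ from the preceding claim to every summand gives
\[
|\unlabr(s)| \;\leq\; s \cdot |\unlabu(s)|,
\]
hence $|\unlabu(s)| \geq |\unlabr(s)|/s$.

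Plugging in \Fact{tree:cnt}, we obtain $|\unlabu(s)| \geq 2^{\Omega(s)}/s$. Since the factor $s$ is absorbed into the exponent (any polynomial factor only shifts the constant hidden in $\Omega(\cdot)$, because $\log s = o(s)$), this yields $|\unlabu(s)| \geq 2^{\Omega(s)}$, as claimed.

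There is really no obstacle here: the orbit claim does all the real work, and this final step is just a pigeonhole/averaging argument. The only minor care needed is in justifying that a polynomial loss of $1/s$ does not affect the $2^{\Omega(s)}$ conclusion, which is immediate.
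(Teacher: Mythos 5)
Your proof is correct and follows the same approach the paper intends: partition $\unlabr(s)$ into orbits of size at most $s$, conclude $|\unlabu(s)| \geq |\unlabr(s)|/s$, and absorb the polynomial factor into the $\Omega(\cdot)$. Your write-up is actually slightly more careful than the paper's, which (apparently due to a typo) states the intermediate inequality as $|\unlabu(s)| \geq |\unlabr(s)|$ rather than $|\unlabu(s)| \geq |\unlabr(s)|/s$.
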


\begin{proof}
	Since all orbits have size at most $s$, it holds that $|\unlabu(s)| \geq |\unlabr(s)| \geq 2^{\Omega(s)}$.
\end{proof}

We can now complete the proof of the main lower bound for forests.

\begin{proof} (of \Thm{main:tw:lb}) The suitable family is simply $\unlabu(\eps^{-1})$.
Since all graphs are of size $t= 1/\eps$, it only suffices to verify the separator bound for suitable families.
But that trivially holds, since the separator size is non-zero, which is $\Omega(1) = \Omega(\eps t)$.
The size of the family is $\exp(\eps^{-1})$, by \Clm{unlabu:size}. \Lem{main} gives the lower
bound of $\exp(\eps^{-1})$ for property testing an explicit singleton property.
\end{proof}

\bibliographystyle{alpha}
\bibliography{ref}

\begin{appendix}
\section{Missing proofs from \Sec{up}}


In this appendix, our objective is to prove \Clm{count:tw} and \Clm{po:on:tw}. The former claim counts the number
of graphs with treewidth at most $\tau$ and is proved below.

\begin{claim} \label{clm:count:tw}
	The number of unlabeled bounded treewidth graphs on $t$ vertices is at most $\exp{O(t\log t)}$.
\end{claim}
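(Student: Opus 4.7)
The plan is to use the classical edge bound for bounded-treewidth graphs and then a crude counting argument, bounding unlabeled graphs by labeled graphs.

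First, I would recall the standard fact that any graph on $t$ vertices with treewidth at most $\tau$ has at most $\tau\,t$ edges. This follows directly from a tree decomposition: orient each edge toward the bag in which both endpoints first appear (in a rooted decomposition), and note that each vertex is responsible for at most $\tau$ edges, since its parent bag has at most $\tau$ other vertices. Since $\tau$ is a fixed constant for the class we are counting, this gives $m \leq \tau t = O(t)$ edges in any such graph.

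Next, I would upper bound the number of \emph{labeled} graphs on $t$ vertices having at most $\tau t$ edges. This count is at most
\[
\sum_{k=0}^{\tau t} \binom{\binom{t}{2}}{k} \;\leq\; (\tau t + 1)\binom{t^2}{\tau t} \;\leq\; (\tau t + 1)\left(\frac{e\,t^2}{\tau t}\right)^{\tau t} \;=\; (\tau t + 1)\left(\frac{e\,t}{\tau}\right)^{\tau t},
\]
using the standard inequality $\binom{a}{b} \leq (ea/b)^b$. Taking logarithms, this is $\exp(O(t \log t))$ for constant $\tau$.

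Finally, I would use the trivial inequality that the number of unlabeled graphs on $t$ vertices is at most the number of labeled graphs on $t$ vertices: each isomorphism class contributes $t!/|\mathrm{Aut}(G)| \geq 1$ labeled graphs, so the number of isomorphism classes is bounded above by the labeled count. Combining with the previous step yields $\exp(O(t\log t))$ unlabeled graphs of treewidth at most $\tau$, as required. There is no real obstacle here; the argument is essentially a one-line calculation once the edge bound is in hand, and the only nontrivial input is the $O(\tau t)$ edge bound for treewidth-$\tau$ graphs, which is textbook.
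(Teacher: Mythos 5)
Your proof is correct and follows essentially the same route as the paper's: both use the fact that a treewidth-$\tau$ graph on $t$ vertices has $O(t)$ edges, then bound the count by choosing $O(t)$ edges from $\binom{t}{2}$ possible positions, giving $t^{O(t)} = \exp(O(t\log t))$. You are somewhat more explicit than the paper about the edge bound's justification and the labeled-to-unlabeled reduction, but the argument is the same.
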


\begin{proof}
 	Observe that for a graph on $t$ vertices with bounded treewidth $\tau=O(1)$ has $O(t)$ edges. Then, the number of unlabelled graphs of this form, $T_t$ is at most ${N\choose O(t)}$, where $N={t\choose 2}$. Thus $T_t=t^O(t)$, which implies that the number of bounded treewidth graphs on $t$ vertices is at most $\exp(O(t\log t))$.
\end{proof}

Next, we produce the statment of \Clm{po:on:tw}. It postulates that there exist efficient 
$poly(d \eps^{-1})$ time implementations of partition oracles for $d$-degree bounde graphs 
with bounded treewidth. 

One key tool we need to prove this theorem is an analog of Alon-Seymour-Thomas style result which
proves that a graph with treewidth at most $\tau$ is $(\eps, O(\tau/\eps))$-hyperfinite. 
%

Let $\tilde{s}(G)$ be the size of the smallest $1/3-2/3$ balanced separator, 
whose removal leads to connected components of size at most $2/3$ of the vertices.

\begin{theorem} \label{thm:treewidth} 
[Theorem 2 of \cite{Gruber13} and Theorem 12 of \cite{BODLAENDER1995238}]
$\tilde{s}(G)-1\leq tw(G)$.
\end{theorem}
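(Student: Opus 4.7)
The plan is to establish $\tilde{s}(G) \leq tw(G) + 1$ by showing that any tree decomposition witnessing treewidth $w = tw(G)$ contains a bag that doubles as a $1/3$-$2/3$ balanced vertex separator of $G$. Concretely, fix a tree decomposition $(T, \{X_t\}_{t \in T})$ of $G$ with $|X_t| \leq w+1$ for every $t \in T$. For each oriented edge $(t,s)$ in $T$, let $V_{t \to s}$ denote the union of bags $X_r$ over nodes $r$ in the subtree of $T - \{t\}$ that contains $s$, minus $X_t$. Using the running-intersection property of tree decompositions, one checks that for an undirected edge $\{s,t\}$ with $A, B$ the two bag-unions on either side, $A \cap B = X_s \cap X_t$, whence $|V_{t \to s}| + |V_{s \to t}| = n - |X_s \cap X_t| \leq n$.

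First I would produce a ``balanced'' bag via a centroid-style walk on $T$. Starting at an arbitrary node $t_0$, at each step I check whether $|V_{t \to s}| \leq 2n/3$ for every neighbor $s$ of the current node $t$; if so, stop; otherwise there is at most one neighbor $s^\star$ with $|V_{t \to s^\star}| > 2n/3$ (uniqueness because the $V_{t \to s}$ partition $V(G) \setminus X_t$), and we walk to $s^\star$. The cross-edge identity $|V_{t \to s^\star}| + |V_{s^\star \to t}| \leq n$ immediately gives $|V_{s^\star \to t}| < n/3$, so the walk never traverses the same edge in reverse. Finiteness of $T$ then forces termination at some node $t^\star$ satisfying $|V_{t^\star \to s}| \leq 2n/3$ for every neighbor $s$.

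Once $t^\star$ is identified, the argument wraps up immediately: removing $X_{t^\star}$ from $G$ destroys all edges between distinct sets $V_{t^\star \to s}$ (again by the tree-decomposition property), so every connected component of $G - X_{t^\star}$ is contained in some $V_{t^\star \to s}$ of size at most $2n/3$. Hence $X_{t^\star}$ is a valid $1/3$-$2/3$ balanced separator of size at most $w+1$, which yields $\tilde{s}(G) \leq tw(G) + 1$ as desired.

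The main obstacle is the set-theoretic bookkeeping around oriented edges of $T$: verifying that $\{V_{t \to s}\}_s$ partitions $V(G) \setminus X_t$, that $X_t$ separates these sets in $G$, and that the clean identity $|V_{t \to s}| + |V_{s \to t}| = n - |X_s \cap X_t|$ holds, all of which rest on the running-intersection property. Once these identities are in hand, the termination of the centroid walk and the final bound are routine; no analytic or geometric machinery is required.
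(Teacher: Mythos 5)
Your proof is correct. Note, however, that the paper does not actually prove this statement: it is imported verbatim from the literature (Gruber and Bodlaender are cited), so there is no internal argument to compare against. What you have written is a complete, self-contained derivation via the standard ``centroid bag'' argument for tree decompositions, which is also the argument used in the cited references. The bookkeeping you flagged all checks out: the sets $V_{t\to s}$ do partition $V(G)\setminus X_t$ because the nodes whose bags contain a given vertex $v\notin X_t$ form a connected subtree of $T$ avoiding $t$, hence lie in a single component of $T-t$; the identity $|V_{t\to s}|+|V_{s\to t}| = n - |X_s\cap X_t|$ follows from $A\cup B = V(G)$, $A\cap B = X_s\cap X_t$, and $A\cap X_t = B\cap X_s = X_s\cap X_t$; the walk cannot reverse an edge $\{t,s\}$ because $|V_{t\to s}|>2n/3$ forces $|V_{s\to t}|<n/3$ permanently, and a closed walk in a tree must reverse some edge, so the walk never revisits a node and terminates; and at the terminal node $t^\star$ the bag $X_{t^\star}$ separates the sets $V_{t^\star\to s}$ in $G$ because any $G$-edge between two of them would need a bag meeting both, which is impossible. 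The conclusion $\tilde{s}(G)\le |X_{t^\star}|\le tw(G)+1$ then gives the stated inequality. In short: a valid proof supplying what the paper only cites.
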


We now prove that every treewidth $\tau$ graph is $(\eps, 6\tau/\eps)$ hyperfinite. 

\begin{lemma} \label{lem:ASTtype}
	Fix $\eps > 0$, $d, \tau \in \N$. Then for any graph $G$ with $tw(G) \leq \tau-1$ 
	there exists a set of edges $E'$ such that the following hold:
	\begin{itemize}
		\item $G\setminus E'$ is a graph with connected components no larger than $6\tau/\eps$. 
		\item $|E'| \leq \eps dn$
	\end{itemize}
\end{lemma}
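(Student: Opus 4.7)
The plan is to carry out an Alon--Seymour--Thomas style recursive decomposition, invoking \Thm{treewidth} at each step, and then to control the total number of removed edges by a charging argument that exploits the geometric shrinkage guaranteed by the $1/3$-$2/3$ balanced separator.

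First I would set $k := 6\tau/\eps$, which is our target component size. I would then describe a recursive procedure $\texttt{Decompose}(H)$ that takes as input a subgraph $H$ of $G$: if $|V(H)|\leq k$ it halts, and otherwise it invokes \Thm{treewidth} on $H$ (note that $tw(H)\leq tw(G)\leq \tau-1$ since treewidth is monotone under taking subgraphs) to produce a balanced vertex separator $S\subseteq V(H)$ of size at most $\tau$, adds all edges of $H$ incident to $S$ to the output edge set $E'$, and then recurses on each connected component of $H\setminus S$. The separator vertices are left as isolated singletons in the decomposition, which keeps the construction purely edge-deletional. The $1/3$-$2/3$ balancedness of $S$ forces every recursive subcall to be on a component of size at most $\tfrac{2}{3}|V(H)|$, guaranteeing termination, and by design every leaf component has size at most $k=6\tau/\eps$.

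Second I would bound $|E'|$ by a charging argument. Each time $\texttt{Decompose}$ is invoked on a component $C$ with $|C|>k$, it removes at most $d|S|\leq d\tau$ edges; charge this cost uniformly to the vertices of $C$, giving each $v\in C$ a charge of at most $d\tau/|C|$. A fixed vertex $v$ is charged once per nested component $C_1\supset C_2\supset\cdots\supset C_L$ containing $v$ for which splitting occurred, so in particular $|C_i|>k$ for all $i\leq L$, and balancedness gives $|C_i|\leq (2/3)^{L-i}|C_L|^{-1}\cdot|C_L|\cdot(3/2)^{L-i}$, i.e.\ $|C_{i+1}|\leq \tfrac{2}{3}|C_i|$. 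Therefore
\[
\text{charge}(v)\;\leq\;\sum_{i=1}^{L}\frac{d\tau}{|C_i|}\;\leq\;\frac{d\tau}{|C_L|}\sum_{j=0}^{L-1}\left(\tfrac{2}{3}\right)^{j}\;\leq\;\frac{3d\tau}{|C_L|}\;\leq\;\frac{3d\tau}{k}\;=\;\frac{d\eps}{2}.
\]
Summing over all $n$ vertices yields $|E'|\leq dn\eps/2\leq \eps dn$, as required.

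I do not expect serious obstacles: the recursion is standard, the balanced-separator guarantee is exactly what \Thm{treewidth} gives us, and the only mild care needed is (i) verifying that removing edges incident to $S$ rather than deleting vertices still produces a valid edge-only decomposition (trivial, since the separator vertices just become singletons), and (ii) confirming that the geometric decay of component sizes lets the per-vertex charge telescope to $O(d\tau/k)$ rather than $O(d\tau L/k)$; the key inequality above is what closes that gap.
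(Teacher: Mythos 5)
Your proposal is correct and proves the lemma, but it organizes the final counting differently from the paper, and the difference is substantive enough to be worth noting. The paper's \texttt{Decompose} routine uses a \texttt{delsep} subroutine that bipartitions the components of $H \setminus S$ into two groups $G_0, G_1$ each of size at least $|V(H)|/3$, which forces the recursion tree to be binary and forces every leaf to have size at least $\tfrac{1}{3}\cdot 6\tau/\eps = 2\tau/\eps$; the paper then bounds the number of leaves by $\eps n/(2\tau)$, hence the number of internal nodes by the same order, and multiplies by the at most $\tau d$ edges removed per node. You instead recurse on each component of $H\setminus S$ directly (so the recursion tree can have arbitrary fanout and leaves can be arbitrarily small), and you replace the node-counting step with a per-vertex charging argument in which each vertex $v$ in a split component $C$ absorbs $d\tau/|C|$ of the cost. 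Because $|C_{i+1}| \leq \tfrac{2}{3}|C_i|$ along the nested chain containing $v$, and the chain only includes components of size larger than $k = 6\tau/\eps$, the charges telescope via a geometric series to at most $3d\tau/k = d\eps/2$ per vertex. This charging scheme is a bit cleaner: it does not need the grouping step in \texttt{delsep}, does not need a lower bound on leaf sizes, and works unchanged if some recursive subcalls land on tiny components. What the paper's version buys is a slightly more direct read (``few leaves, each costs little''). Both arguments land on $|E'|\leq \eps dn$.

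One small typo to fix: the intermediate line ``$|C_i|\leq (2/3)^{L-i}|C_L|^{-1}\cdot|C_L|\cdot(3/2)^{L-i}$'' collapses to a tautology and does not say what you meant; you want to record only $|C_{i+1}| \leq \tfrac23 |C_i|$, hence $|C_i| \geq (3/2)^{L-i}|C_L|$, hence $1/|C_i| \leq (2/3)^{L-i}/|C_L|$, which is exactly what your displayed inequality uses. Also be sure to note explicitly (as you do in passing) that treewidth is monotone under subgraphs, so \Thm{treewidth} applies at every level of the recursion with the same bound $|S|\leq \tau$.
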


\begin{proof}
	We begin by showing item 1. Since $G$ has treewidth at most $\tau$, all of its
	subgraphs also have treewidth $\tau$. Also, recall $tw(G) \leq \tau$, implies
	that for all $1/3$-balanced separators of $G$ satisfy $|\balsep(G)| \leq O(\tau)$. 

	Consider the following recursive procedure. Here $\delsep(G)$ is a routine which returns
	a three tuple $(S, G_0, G_1)$ where $S$ is the smallest balanced separator of $G$ and 
	$G_0$ and $G_1$ are collections of connected components of $G$ each with size at least
	$|V(G)|/3$. \\

	\noindent
	\fbox{
		\begin{minipage}{0.9\textwidth}
			{\decompose$(G)$}
			\smallskip
			\begin{compactenum}
			\item If $|V(G)| \geq 6\tau/\eps$
				\begin{compactenum}
				\item Write $(S_0, G_0, G_1) = \delsep(G)$.
				\item $\decompose(G_0), \decompose(G_1)$.
				\end{compactenum}
			\end{compactenum}
	\end{minipage}} \\

%
	Note that the sets $G_0$ and $G_1$ produced at any intermediate step of this
	procedure are \emph{unions of connected components} (and they are not necessarily
	connected components themselves). Consider the recursion tree of this process and
	let us examine the leaves of this recursion tree. The leaf nodes correspond to 
	some subgraph $G$ (which is a union of connected components) obtained after repeated 
	applications of balanced separator routine such that $|V(G)| \leq 6 \tau/\eps$.
	Also, since the procedure repeatedly deletes a $1/3$-separator, it also holds that
	all the leaf nodes of the recursion tree correspond to graphs with at least 
	$2\tau/\eps$ vertices. Thus, since the leaves correspond to disjoint subgraphs, the number of leaves equals $L \leq \frac{\eps n}{2\tau}$
	and the total number of nodes produced in the recursion is at most $2L$. Generating
	every node requires deleting at most $\tau$ edges. Thus, in all, we delete
	$\eps dn$ edges which proves the lemma.


\end{proof}

%
%
%
%

We now prove bounds on partition oracles for bounded treewidth graphs.

\begin{restatable}{claim}{poontw} \label{clm:po:on:tw}
		Let $\cT_{\tau}$ denote the class of graphs with treewidth at most $\tau$. 
		Let $G \in \cT_{\tau}$ denote a graph with treewidth at most $\tau$. Then $G$ admits 
		an $(\eps, k)$ partition oracle with $k \leq \frac{30 \tau}{\eps}$.
\end{restatable}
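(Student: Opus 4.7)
The plan is to invoke the partition oracle construction of Kumar-Seshadhri-Stolman~\cite{KSS:21} as a black box, feeding it the improved hyperfiniteness bound already established in Lemma~\ref{lem:ASTtype}. At a high level, the KSS:21 construction is generic in the following sense: it takes a graph class in which every graph admits, for every $\eps' > 0$, an $(\eps', k'(\eps'))$-hyperfinite partition, and produces a local partition oracle whose output components have size at most $k'(\Theta(\eps))$ with at most $\eps d n$ cut edges with probability $\geq 2/3$. The $O(\eps^{-2})$ appearing in the statement of Theorem~\ref{thm:kss:21} enters only because the authors plug in the Alon-Seymour-Thomas planar separator theorem to bound $k'$; swapping that for a stronger hyperfiniteness bound on a smaller class yields a correspondingly smaller component size in the oracle's output.

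The first step is to apply Lemma~\ref{lem:ASTtype} with its free $\eps$-parameter rescaled to a constant fraction of the target proximity parameter. Taking $\eps' = \eps/5$ in Lemma~\ref{lem:ASTtype}, every treewidth-$\tau$ graph admits a global decomposition whose components have size at most $6\tau/\eps' = 30\tau/\eps$ and whose removed edge set has size at most $(\eps/5) d n$. The second step is to run the KSS:21 machinery on this decomposition: exactly as in the proof of Theorem~\ref{thm:kss:21}, the oracle can be configured so that every queried component is a connected subset of size at most the global block size $30\tau/\eps$, and so that with probability $\geq 2/3$ over the random seed the number of edges crossing the induced partition is at most $\eps d n$. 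This matches Definition~\ref{def:oracle} with $k = 30\tau/\eps$ and gives the claimed bound.

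The one subtlety to address is that Theorem~\ref{thm:kss:21} is phrased for minor-closed families rather than as a reduction from hyperfiniteness to a partition oracle. I would resolve this by pointing to the parameter in the KSS:21 analysis that tracks the block size $k'$ of the underlying hyperfinite decomposition and observing that no step of that analysis uses minor-closedness beyond the hyperfiniteness it implies. An alternative, should the black-box appeal feel unsatisfying, is to re-derive the oracle by directly exploiting the recursive balanced-separator decomposition used in the proof of Lemma~\ref{lem:ASTtype}: since bounded-treewidth graphs have $O(\tau)$-sized balanced separators at every scale, the standard global-to-local reduction (partition via random thresholds on a recursive separator tree) gives a partition oracle whose component sizes are controlled by the recursion depth at which the subproblem size drops below $30\tau/\eps$. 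Either route is routine; the only real content is the improved hyperfiniteness bound, which is already in hand.
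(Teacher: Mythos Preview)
Your proposal and the paper's proof differ in a meaningful way, and the difference is worth understanding. You want to feed the improved hyperfiniteness bound of \Lem{ASTtype} \emph{into} the KSS:21 machinery, treating the $O(\eps^{-2})$ in \Thm{kss:21} as a parameter you can swap out. The paper does not do this. Instead, it runs \Thm{kss:21} \emph{exactly as stated} with parameter $\alpha = \eps/2$, obtaining components of size $O(\eps^{-2})$ with at most $(\eps/2)dn$ cut edges. Then, as a \emph{post-processing step}, it applies \Lem{ASTtype} locally to each component $P$ with $|P| > 30\tau/\eps$ (with parameter $\beta = \eps/2$), refining it into pieces of size $O(\tau/\eps)$ at the cost of another $(\eps/2)dn$ cut edges. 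Since each $P$ has size $O(\eps^{-2})$, this refinement can be carried out by the oracle after reading all of $P$, in $\poly(d\eps^{-1})$ time per query. The two edge budgets sum to $\eps dn$.

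The paper's approach is strictly modular: it never looks inside KSS:21. Your approach, by contrast, requires verifying that the KSS:21 analysis is parametric in the hyperfiniteness function $k'(\cdot)$ and uses minor-closedness only through hyperfiniteness. You flag this as ``the one subtlety'' but do not actually check it, and it is not obvious---partition oracle constructions often exploit separator structure directly, not just the existence of a hyperfinite decomposition, so the output component size may not simply be $k'(\Theta(\eps))$. Your fallback (re-derive an oracle from the recursive separator tree) is plausible but is a real construction, not ``routine.'' The paper's two-stage trick sidesteps all of this: once KSS:21 hands you constant-size pieces, any further decomposition of those pieces is trivially local.
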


\begin{proof} 
	Set $\alpha = \eps/2$. Since $\cT_{\tau}$ is a minor-closed class of graphs, it follows from \Thm{kss:21}
	that any $G \in \cT_{\tau}$ admits an $(\alpha, O(\alpha^{-2}))$ partition oracle whose setting on $G$
	is denoted as $\cP(G)$. Note that the number
	of edges that run between the connected components of the underlying partition is at most 
	$\alpha nd \leq \half \cdot \eps dn$.
	We will show how to modify the parition obtained by this oracle and obtain another oracle which returns
	components with size at most $30 \tau/\eps$. This is done as follows. 

	Let $P \in \cP(G)$ denote some connected component in the partition $\cP(G)$. If $|P| \leq 30 \tau/\eps$, 
	we do not refine $P$. On the other hand, if $|P| > 30 \tau/\eps$, we use \Lem{ASTtype} on $P$
	with parameter $\beta = \eps/2$. This refines $P$ and produces connected components of size
	at most $3 \tau/\beta \leq 6 \tau/\eps \leq 30 \tau/\eps$ as desired. And the total number of edges lost is
	at most $\beta dn \leq \half \cdot \eps dn$. Overall, the total number of edges lost is 
	at most $\eps dn$ as desired.
\end{proof}

\end{appendix}

\end{document}